\newcommand{\ff}[1]{\mathbb{F}_{#1}} 
\newcommand{\Kf}{\mathbb{K}}
\newcommand{\Fq}{\ff{q}}
\newcommand{\Fqm}{\ff{q^m}}
\newcommand{\NN}{\mathbb{N}} 
\newcommand{\any}{*} 
\newcommand{\rank}[1]{\operatorname{Rank}\mathchoice{\left(#1\right)}{(#1)}{(#1)}{(#1)}} 
\newcommand{\rw}[1]{\left| #1 \right|}
\DeclareMathOperator{\Mat}{Mat}
\DeclareMathOperator{\Jac}{Jac}
\DeclareMathOperator{\MaxMinors}{MaxMinors}
\newcommand{\Sign}[2]{\sigma_{#2}{(#1)}}
\newcommand{\matRing}[3]{#1^{#2 \times #3}} 
\newcommand{\mat}[1]{\boldsymbol{#1}} 
\newcommand{\word}[1]{\vec{\boldsymbol{#1}}} 
\newcommand{\ident}{\mat{I}} 
\newcommand{\trsp}[1]{#1^\mathsf{T}} 
\newcommand{\cv}{\mat{c}}
\newcommand{\ev}{\mat{e}}
\newcommand{\vv}{\mat{v}}
\newcommand{\xv}{\mat{x}}
\newcommand{\yv}{\mat{y}}
\newcommand{\sm}{\mat{S}} 
\newcommand{\zerom}{\mat{0}}
\newcommand{\Am}{\mat{A}}
\newcommand{\Bm}{\mat{B}}
\newcommand{\Km}{\mat{K}}
\newcommand{\Cm}{\mat{C}}
\newcommand{\Gm}{\mat{G}}
\newcommand{\Hm}{\mat{H}}
\renewcommand{\Im}{\mat{I}}
\newcommand{\Mm}{\mat{M}}
\newcommand{\Rm}{\mat{R}}
\newcommand{\Xm}{\mat{X}}
\newcommand{\Ym}{\mat{Y}}
\newcommand{\Cc}{{\mathcal C}}
\newcommand{\cF}{{\mathcal F}}
\newcommand{\ctF}{{\cF}^h}
\newcommand{\tf}{f^h}
\newcommand{\eqdef}{:=} 
\newcommand{\OOtsp}[1]{O(#1)} 
\newcommand{\Thtsp}[1]{\Theta(#1)} 
\newcommand{\OO}[1]{\mathcal{O}\left( #1 \right)}
\newcommand{\Th}[1]{\Theta\left( #1 \right)}
\newcommand{\dff}{d_{\mathrm{ff}}}
\newcommand{\cm}{\mat{C}} 
\newcommand{\err}{\word{e}} 
\newcommand{\synd}{\word{s}} 
\newcommand{\OJLP}{Ourivski-Johansson}
\newcommand\bibalias[2]{%
  \@namedef{bibali@#1}{#2}%
}
\newtoks\biba@toks
\newcommand\acite[2][]{%
  \biba@toks{\cite#1}%
  \def\biba@comma{}%
  \def\biba@all{}%
  \@for\biba@one:=#2\do{%
    \@ifundefined{bibali@\biba@one}{%
      \edef\biba@all{\biba@all\biba@comma\biba@one}%
    }{%
      \PackageInfo{bibalias}{%
        Replacing citation `\biba@one' with `\@nameuse{bibali@\biba@one}'
      }%
      \edef\biba@all{\biba@all\biba@comma\@nameuse{bibali@\biba@one}}%
    }%
    \def\biba@comma{,}%
  }%
  \edef\biba@tmp{\the\biba@toks{\biba@all}}%
  \biba@tmp
}
\newtheorem{heuristic}{Heuristic}
\begin{document}

\title{An Algebraic Attack on Rank Metric Code-Based Cryptosystems}

\author{
  Magali Bardet\inst{1,2}
  \and
  Pierre Briaud\inst{2}
  \and
  Maxime Bros\inst{3}
  \and
  Philippe Gaborit\inst{3}
  \and
  Vincent Neiger\inst{3}
  \and
  Olivier Ruatta\inst{3}
  \and
  Jean-Pierre Tillich\inst{2}
}

\institute{
  LITIS, University of Rouen Normandie, France
  \and
  Inria, 2 rue Simone Iff, 75012 Paris, France
  \and
  Univ.~Limoges, CNRS, XLIM, UMR 7252, F-87000 Limoges, France
}

\institute{
  LITIS, University of Rouen Normandie, France
  \and
  Inria, 2 rue Simone Iff, 75012 Paris, France
  \and
  Univ.~Limoges, CNRS, XLIM, UMR 7252, F-87000 Limoges, France
}

\maketitle

\begin{abstract}
  The Rank metric decoding problem is the main problem considered in
  cryptography based on codes in the rank metric. Very efficient schemes based
  on this problem or quasi-cyclic versions of it have been proposed recently,
  such as those in the submissions ROLLO and RQC currently at the second round
  of the NIST Post-Quantum Cryptography Standardization Process. While
  combinatorial attacks on this problem have been extensively studied and seem
  now well understood, the situation is not as satisfactory for algebraic
  attacks, for which previous work essentially suggested that they were
  ineffective for cryptographic parameters.
  In this paper, starting from Ourivski and Johansson's algebraic modelling of
  the problem into a system of polynomial equations, we show how to augment
  this system with easily computed equations so that the augmented system is
  solved much faster via Gr\"obner bases. This happens because the augmented
  system has solving degree $r$, $r+1$ or $r+2$ depending on the parameters,
  where $r$ is the rank weight, which we show by extending results from Verbel
  \emph{et al.} (PQCrypto 2019) on systems arising from the MinRank problem;
  with target rank $r$, Verbel \emph{et al.} lower the solving degree to $r+2$,
  and even less for some favorable instances that they call
  ``superdetermined''.  We give complexity bounds for this approach as well as
  practical timings of an implementation using \texttt{magma}. This improves
  upon the previously known complexity estimates for both Gr\"obner basis and
  (non-quantum) combinatorial approaches, and for example leads to an attack in
  200 bits on ROLLO-I-256 whose claimed security was 256 bits.
 
  \keywords{Post-quantum cryptography
    \and NIST-PQC candidates
    \and rank metric code-based cryptography
    \and Gr\"obner basis.}
\end{abstract}

\section{Introduction}
\label{sec:intro}

\subsubsection*{Rank metric code-based cryptography.}

In the last decade, rank metric code-based cryptography has proved to be a
powerful alternative to more traditional code-based cryptography based on the
Hamming metric. This thread of research started with the GPT cryptosystem
\cite{GPT91} based on Gabidulin codes \cite{G85}, which are rank metric
analogues of Reed-Solomon codes. However, the strong algebraic structure of
those codes was successfully exploited for attacking the original GPT
cryptosystem and its variants with the Overbeck attack \cite{O05} (see for
example \cite{OTN18} for one of the latest related developments). This has to
be traced back to the algebraic structure of Gabidulin codes that makes masking
extremely difficult; one can draw a parallel with the situation in the Hamming
metric where essentially all McEliece cryptosystems based on Reed-Solomon codes
or variants of them have been broken. However, recently a rank metric analogue
of the NTRU cryptosystem from \cite{HPS98} has been designed and studied,
starting with the pioneering paper \cite{GMRZ13}. Roughly speaking, the NTRU
cryptosystem relies on a lattice that has vectors of rather small Euclidean
norm. It is precisely those vectors that allow an efficient
decoding/deciphering process. The decryption of the cryptosystem proposed in
\cite{GMRZ13} relies on LRPC codes that have rather short vectors in the dual
code, but this time for the rank metric. These vectors are used for decoding in
the rank metric. This cryptosystem can also be viewed as the rank metric
analogue of the MDPC cryptosystem \cite{MTSB12} that relies on short vectors in
the dual code for the Hamming metric. 

This new way of building rank metric code-based cryptosystems has led to a
sequence of proposals \acite{GMRZ13,GRSZ14,LAKE,LOCKER}, culminating in
submissions to the NIST post-quantum competition \acite{Ouroboros-R,RQC}, whose
security relies solely on the decoding problem in rank metric codes with a ring
structure similar to the ones encountered right now in lattice-based
cryptography. Interestingly enough, one can also build signature schemes using
the rank metric; even though early attempts which relied on masking the
structure of a code \acite{GRSZ14a,RankSign} have been broken \cite{DT18b}, a
promising recent approach \cite{ABGHZ19} only considers random matrices without
structural masking.

\subsubsection*{Decoding in rank metric.}

In other words, in rank metric code-based cryptography we are now only left
with assessing the difficulty of the decoding problem for the rank metric. The
rank metric over $\Fq^N$, where $\Fq$ is the finite field of cardinality $q$
and $N=m n$ is a composite integer, consists in viewing elements in this
ambient space as $m \times n$ matrices over \(\Fq\) and considering the
distance $d(\Xm,\Ym)$ between two such matrices $\Xm$ and $\Ym$ as
\[
  d(\Xm,\Ym) = \rank{\Ym-\Xm}.
\]
A (linear matrix) code $\Cc$ in $\Fq^{m \times n}$ is simply a $\Fq$-linear
subspace in $\Fq^{m \times n}$, generated by $K$ matrices $\Mm_1,
\dots, \Mm_K$. The decoding problem for the rank metric at distance $r$ is as
follows: given a matrix $\Ym$ in $\Fq^{m \times n}$ at distance $\leq r$ from
$\Cc$, recover an element $\Mm$ in $\Cc$ at distance $\leq r$ from $\Ym$. This
is precisely the MinRank problem given as input $\Ym$ and $\Mm_1,\dots,\Mm_K$:

\begin{problem}[MinRank]\\
  \emph{Input}: an integer $r \in \NN$ and $K+1$ matrices $\Ym,\mat{M}_1,\dots,\mat{M}_K \in \matRing{\ff{q}}{m}{n}$.\\
  \emph{Output}: field elements $x_1,x_2,\dots,x_K \in \ff{q}$ such that
  \begin{equation*}
    \rank{\Ym - \sum_{i=1}^{K}x_i\Mm_i} \leq r.
  \end{equation*}
\end{problem}
As observed in \cite{BFS99}, the MinRank problem is NP-complete and the best
known algorithms solving it have exponential complexity bounds.

\subsubsection*{Matrix codes specified as $\Fqm$-linear codes.}

However, the trend in rank metric code-based cryptography has been to consider
a particular form of linear matrix codes: they are linear codes of length $n$
over an extension $\Fqm$ of degree $m$ of $\Fq$, that is, $\Fqm$-linear
subspaces of $\Fqm^n$. In the rest of this section, we fix a basis
$(\beta_1,\dots,\beta_m)$ of $\Fqm$ as a $\Fq$-vector space. Then such codes
can be interpreted as matrix codes over $\Fq^{m \times n}$ by viewing a vector
$\xv =(x_1,\dots,x_n) \in \Fqm^n$ as a matrix \(\Mat(\xv) = (X_{ij})_{i,j}\) in
\(\matRing{\Fq}{m}{n}\), where $(X_{ij})_{1 \leq i \leq m}$ is the column
vector formed by the coordinates of $x_j$ in the basis
$(\beta_1,\dots,\beta_m)$, that is, \(x_j = X_{1j} \beta_1 + \cdots + X_{mj}
\beta_m\).

Then the ``rank'' metric $d$ on $\Fqm^n$ is the rank metric on the associated
matrix space, namely
\[
  d(\xv,\yv) \eqdef \rw{\yv - \xv},
  \quad\text{where we define }
  \rw{\xv} \eqdef \rank{\Mat(\xv)}.
\]
An $\Fqm$-linear code $\Cc$ of length $n$ and dimension $k$ over $\Fqm$ specifies a matrix code $\Mat(\Cc)
\eqdef \{\Mat(\cv):\;\cv \in \Cc\}$ in $\Fq^{m \times n}$ of dimension $K
\eqdef m k$ over $\Fq$: it is readily verified that a basis of this
$\Fq$-subspace is given by $(\Mat(\beta_i \cv_j))_{1 \leq i \leq m, 1 \leq j
\leq k}$ where $(\cv_1,\dots,\cv_k)$ is a basis of $\Cc$ over $\Fqm$.

There are several reasons for this trend. On the one hand, the families of matrix codes for
which an efficient decoding algorithm is known are families of $\Fqm$-linear codes. On the other hand,
$\Fqm$-linear codes have a much shorter description than general matrix codes. Indeed, a
matrix code in $\Fq^{m \times n}$ of dimension $K=k m$ can be specified by a
basis of it, which uses $K m n \log(q) = k m^2 n \log(q)$ bits, whereas a
matrix code obtained from an $\Fqm$-linear code of dimension $k$ over $\Fqm$
can be specified by a basis $(\cv_1,\dots,\cv_k)$ of it, which uses $k m n
\log(q)$ bits and thus saves a factor $m$. 

Progress in the design of efficient algorithms for decoding \(\Fqm\)-linear codes suggests that their additional structure may not have a significant impact on the difficulty of
solving the decoding problem. For instance, a generic matrix code over
$\Fq^{m\times n}$ of dimension $K=m k$ can be decoded using the information set
decoder of \cite{GRS16}
within a complexity of the order of $q^{k r}$ when the errors have rank at most
$r$ and $m \geq n$, compared to $q^{kr - m}$ for the decoding of a linear code
over $\Fqm^n$ in the same regime, using a similar decoder \cite{AGHT18}.
Moreover, even if the decoding problem is not known to be NP-complete for these
$\Fqm$-linear codes, there is a randomised reduction to an NP-complete problem
\cite{GZ14} (namely to decoding in the Hamming metric).
Hereafter, we will use the following terminology.

\begin{problem}[$(m,n,k,r)$-decoding problem]\\
  \emph{Input}: an $\Fqm$-basis $(\cv_1,\dots,\cv_k)$ of a subspace $\Cc$ of
  $\Fqm^n$, an integer $r \in \NN$, a vector $\yv \in \Fqm^n$ at distance
  at most \(r\) of \(\Cc\) (i.e.~$\rw{\yv-\cv} \leq r$ for some $\cv \in \Cc$). \\
  \emph{Output}: $\cv \in \Cc$ and $\ev \in \Fqm^n$  such that $\yv=\cv+\ev$
  and $\rw{\ev} \leq r$.
\end{problem}
The region of parameters which is of interest for the NIST submissions
corresponds to $m = \Th{n}$, $k=\Th{n}$ and $r = \Th{\sqrt{n}}$. 

\subsubsection*{Gr\"obner basis techniques for decoding in the rank metric.}

The aforementioned algorithm from \cite{AGHT18} for solving the decoding
problem follows a combinatorial approach pioneered in \cite{OJ02}, which is
related to decoding techniques for the Hamming metric. Another approach
consists in viewing the decoding problem as a particular case of MinRank and
using the algebraic techniques designed for this problem; namely these
techniques use a suitable algebraic modelling of a MinRank instance into a
system of multivariate polynomial equations, and then solve this system with
Gr\"obner basis techniques. Several modellings have been considered, such as
the Kipnis-Shamir modelling \cite{KS99} and the minors modelling (described for
example in \cite{FSS10}); the complexity of solving MinRank using these
modellings has been investigated in \cite{FLP08,FSS10}.
The Kipnis-Shamir modelling boils down to a polynomial system which is
\emph{affine bilinear}. This means that each equation has degree at most $2$
and the set of variables can be partitioned into two sets
$\{x_1,\dots,x_s\}\cup\{y_1,\dots,y_t\}$ such that all monomials of degree $2$
involved in the equations are of the form $x_i y_j$; in other words, the
equations are formed by a quadratic part which is bilinear plus an affine part.
Although the complexity of solving this system can be bounded by that of
solving bilinear systems, which is studied in \cite{FSS11}, the complexity
estimates thus obtained are very pessimistic, as observed experimentally in
\cite{CSV17}. A theoretical explanation of why Gr\"obner basis techniques
perform much better on the Kipnis-Shamir modelling than on generic bilinear
systems was later given in \cite{VBCPS19}. It was also demonstrated there that
the Kipnis-Shamir approach is more efficient than the minors approach on
several multivariable encryption or signature schemes relying on the MinRank
problem. However, the speed-up obtained for the Kipnis-Shamir modelling in the
latter reference mostly comes from the ``superdetermined'' case considered
therein. When applied to the $(m,n,k,r)$-decoding problem, this corresponds to
the case where $m=n$ and $km < nr$; this condition is not met in the decoding
problem instances we are interested in.

Another algebraic approach to solve the $(m,n,k,r)$-decoding problem was
suggested in \cite[\S V.]{GRS16}. It is based on a new modelling specific to
$\Fqm$-linear codes which fundamentally relies on the underlying $\Fqm$-linear
structure and on $q$-polynomials. Also, it results in a system of polynomial
equations that are sparse and have large degree. This approach seems to be
efficient only if $rk$ is not much larger than $n$. 

\subsubsection*{Our contribution.}

If one compares the best known complexity estimates, the algebraic techniques
appear to be less efficient than the combinatorial ones, such as \cite{OJ02}, 
\cite{GRS16}, and \cite{AGHT18} for the parameters of the rank metric schemes
proposed to the NIST \acite{ROLLO,RQC2} or of other rank metric code-based
cryptosystems \cite{L17}. In \cite{LP06}, Levy-dit-Vehel and Perret pioneered
the use of Gr\"obner basis techniques to solve the polynomial system arising in
the Ourivski-Johansson algebraic modelling \cite{OJ02}, with promising
practical timings. In this paper, we follow on from this approach and show how
this polynomial system can be augmented with additional equations that are easy
to compute and bring on a substantial speed-up in the Gr\"obner basis
computation for solving the system. This new algebraic algorithm results in the
best practical efficiency and complexity bounds that are currently known for
the decoding problem; in particular, it significantly improves upon the
above-mentioned combinatorial approaches.

There are several reasons why the Ourivski-Johansson algebraic modelling
improves upon the Kipnis-Shamir one. First, it has the same affine bilinear
structure and a similar number of equations, but it involves much fewer
variables. Indeed, for the case of
interest to us where $m$ and $k$ are in $\Thtsp{n}$ and $r$ is in
$\Thtsp{n^{1/2}}$, the Kipnis-Shamir modelling involves $\Thtsp{n^2}$ equations
and variables, while the Ourivski-Johansson one involves $\Thtsp{n^2}$ equations
and $\Thtsp{n^{3/2}}$ variables.
Second, this modelling naturally leads to what corresponds to reducing by one
the value of $r$, as explained in Section \ref{sec:modellings}. Third,
and most importantly, the main properties that ensure that the Kipnis-Shamir
modelling behaves much better with respect to Gr\"obner basis techniques than
generic bilinear systems also hold for the Ourivski-Johansson modelling. In
essence, this is due to a \emph{solving degree} which is remarkably low: at
most $r+2$ for the former modelling and at most $r+1$ for the latter. Recall
that the solving degree indicates the maximum degree reached during a Gr\"obner
basis computation; it is known to be a strong predictor of the complexity of
the most expensive step in a Gr\"obner basis computation and has been widely
used for this purpose with confirmations via numerical experiments, see for
instance \cite{GJS06,DG10,DK11,DS13,DY13,VBCPS19}.

To prove the third point, we start from the result about degree falls at the
core of \cite{VBCPS19}, which is based on work from \cite{FSS11}, and we extend
it to a more general setting which includes the Ourivski-Johansson modelling.
In our case, these degree falls mean that from the initial system of quadratic
equations $f_i=0$ of the Ourivski-Johansson modelling, we are able to build
many new equations of degree $r$ that are combinations $\sum_i f_i g_{ij}=0$
where the $g_{ij}$'s are polynomials of degree $r-1$ involved in the $j$-th new
equation.  We also prove that, when the parameters satisfy the condition
\begin{equation}
  \label{eq:highly-overdetermined}
  m \binom{n-k-1}{r} \ge \binom{n}{r} -1,
\end{equation}
by using that these polynomials $\sum_i f_i g_{ij}$ can be expressed as linear
combinations of only a few other polynomials, we can perform suitable linear
combinations of the equations $\sum_i f_i g_{ij}=0$'s giving
$\binom{n-1}{r-1} - 1$
equations  of degree $r-1$. All these polynomial
combinations are easily computed from the initial quadratic equations. By
adding these equations and then performing Gr\"obner basis computations on the
augmented system, we observe that experimentally the Gr\"obner basis algorithm
behaves as expected from the degree fall heuristic:
\begin{itemize}
  \item if \eqref{eq:highly-overdetermined} holds, this degree is $r$ and the
    overall complexity is
    $\OO{\left(\frac{((m+n)r)^{r}}{r!}\right)^\omega}$ operations in $\ff{q}$.
  \item if \eqref{eq:highly-overdetermined} does not hold, the maximum
    degree reached in the Gr\"obner basis computation is $r+1$ (in
    some intermediate cases), or $r+2$, leading to an overall
    complexity of at most
    $\OO{\left(\frac{((m+n)r)^{r+1}}{(r+1)!}\right)^\omega}$ (resp.
    $\OO{\left(\frac{((m+n)r)^{r+2}}{(r+2)!}\right)^\omega}$)
    operations in \(\ff{q}\), where $\omega$ is the exponent of matrix
    multiplication;
\end{itemize}
Note that for a majority of parameters proposed in \acite{ROLLO,RQC2}, the
condition \eqref{eq:highly-overdetermined} holds. Taking for $\omega$ the
smallest value currently achievable in practice, which is $\omega\approx2.8$
via Strassen's algorithm, this leads to an attack on the cryptosystems proposed in
the aforementioned NIST submissions which is in all cases below the claimed classical
security levels.

\section{Notation}
\label{sec:notation}

In the whole paper, we use the following notation and definitions:
\begin{itemize}
  \item Matrices and vectors are written in boldface font $\mat M$.
  \item For a matrix $\Mm$ its entry in row $i$ and column $j$ is denoted by 
  $\Mm[i,j]$.
  \item The transpose of a matrix \(\mat{M}\) is denoted by \(\trsp{\mat{M}}\).
  \item For a given ring \(\mathcal{R}\), the space of matrices with $m$
    rows and $n$ columns and coefficients in \(\mathcal{R}\) is denoted by
    $\mathcal{R}^{m\times n}$.
  \item For $\Mm\in\mathcal R^{m\times n}$, we denote by $vec_{row}(\Mm)$ the column vector formed by concatenating the rows of $\Mm$, i.e. $vec_{row}(\Mm)=\trsp{
      \begin{pmatrix}
        \Mm_{\{1\},\any} & \dots & \Mm_{\{n\},\any}
      \end{pmatrix}
}$.
\item For $\Mm\in\mathcal R^{m\times n}$, we denote by $vec_{col}(\Mm)$ the column vector formed by concatenating the columns of $\Mm$, i.e. $vec_{col}(\Mm)=vec_{row}(\trsp{        \Mm})$.
  \item $\{1..n\}$ stands for the set of integers from $1$ to $n$, and for any subset $J\subset\{k+1..n\}$, we denote by $J-k$ the set $J-k = \{j-k : j \in J\}\subset\{1..n-k\}$.
  \item For two subsets $I\subset\{1..m\}$ and $J\subset\{1..n\}$,
    we write $\mat M_{I,J}$ for the submatrix of $\mat M$ formed by its rows
    (resp.~columns) with index in $I$ (resp.~$J$).
  \item We use the shorthand notation $\mat M_{\any,J} = \mat{M}_{\{1..m\},J}$
    and $\mat M_{I,\any} = \mat{M}_{I,\{1..n\}}$, where \(\mat M\) has \(m\)
    rows and \(n\) columns.
  \item $\ff{q}$ is a finite field of size $q$, and  $\alpha \in \ff{q^m}$ is a primitive element, so that
    $(1,\alpha,\dots,\alpha^{m-1})$ is a basis of $\ff{q^m}$ as an
    $\ff{q}$-vector space. For $\beta \in \ff{q^m}$, we denote by $[\alpha^{i-1}]\beta$ its $i$th coordinate in this basis. 
  \item For $\vec{v}=(v_1,\ldots,v_n) \in \ff{q^m}^n$. The \emph{support} of \(\vec v\) is
    the $\ff{q}$-vector subspace of $\ff{q^m}$ spanned by the vectors
    $v_1,\ldots,v_n$. Thus this support is the column space of the matrix
    \(\Mat(\vec{v})\) associated to \(\vec{v}\) (for any choice of basis), and
    its dimension is precisely $\rank{\Mat(\vec{v})}$.
  \item An $[n,k]$ $\Fqm$-linear code is an $\Fqm$-linear subspace of $\Fqm^n$
    of dimension $k$ endowed with the rank metric.
\end{itemize}

\section{Algebraic modellings of the decoding problem}
\label{sec:modellings}

In what follows, parameters are chosen in the cryptographically relevant region mentionned in the introduction, say $m = \Th{n}$, $k=\Th{n}$ and $r = \Th{\sqrt{n}}$. Decoding instances will then have a single solution $\err$. For simplicity, we assume that the rank of
$\err$ is exactly $r$; in general one can run the algorithm for increasing
values of the target rank up to $r$, until a solution is found, and the most
expensive step will correspond to the largest considered rank. We consider 
here the $(m,n,k,r)$-decoding problem for the code $\Cc$ and assume we have received 
$\yv \in \Fqm^n$ at distance $r$ from $\Cc$ and look for $\cv \in \Cc$ and $\ev$ such that 
$\yv = \cv + \ev$ and $|\ev|=r$.

\subsection{Solving the MinRank instance using Kipnis-Shamir's modelling}
\label{sec:modelling:ks}

As explained in \cref{sec:intro}, a possible approach to perform the decoding
 is to solve the underlying MinRank
instance with $km+1$ matrices in $\ff{q}^{m \times n}$; this is done by 
introducing $\Mm_0 \eqdef \Mat(\yv)$ and 
$\Mm_1,\dots,\Mm_{km}$ which is an $\Fq$-basis of $\Mat(\Cc)$.
Several methods have been developed, and so far the Kipnis-Shamir modelling \cite{KS99} 
seems to be the
most efficient to solve this MinRank instance. We want to find $(z_0,\dots,z_{km})$ 
in $\Fq^{mk+1}$ such that 
$\rank{\sum_{i=0}^{km} z_i \Mm_i}=r$. $(z_0,z_1,\dots,z_{km})$
is a solution to the MinRank problem if and only if the right kernel of
$\sum_{i=0}^{km} {z_i}\mat{M}_i$ contains a subspace of dimension $n-r$ of
$\ff{q}^{n}$. With high probability, a basis of such a space can be written in
systematic form, that is, in the form $\left(\begin{smallmatrix} \ident_{n-r} \\
\mat{K} \end{smallmatrix}\right)$. Thus we have to solve the system
\begin{equation}
  \left(\sum_{i=0}^{km} {z_i}\mat{M}_i\right)\begin{pmatrix}
  \ident_{n-r} \\ \mat{K}
\end{pmatrix}=0, 
\end{equation}
over \(\ff{q}\), where $\mat{K}$ is an $r \times (n-r)$ matrix of
indeterminates. This system is affine bilinear and has $m(n-r)$ equations and
\(km+1+r(n-r)\) variables, which are $z_0,z_1,\dots,z_{km}$ and the $r(n-r)$
entries of $\mat{K}$; each equation has a bilinear part as well as a linear
part which only involves the variables $z_i$.

\subsection{Syndrome modelling}
\label{sec:modelling:syndrome}

We recall here the modelling considered in \acite{ROLLO,RQC}.
Let $\mat{H}$ be a parity-check matrix of  $\Cc$, i.e.
$$\Cc = \{\cv \in \Fqm^n:\cv \trsp{\Hm}=\zerom\}.$$
The \((m,n,k,r)\)-decoding problem can be algebraically described by the system
$\err \trsp{\mat{H}} = \synd$ where $\err \in \Fqm^n$ has
rank $r$ and $\synd\in \Fqm^{(n-k)}$ is given by $\synd \eqdef \yv \trsp{\Hm}$. Let
$(S_1,\dots,S_r)\in\ff{q^m}^r$ be a basis of the support of $\err$; then, $\err
= (S_1 \;\; \cdots \;\; S_r)\cm$, where $\cm\in\matRing{\ff{q}}{r}{n}$ is the
matrix of the coordinates of $\err$ in the basis $(S_1,\dots,S_r)$. Then
expressing the elements $S_i$ in the basis $(1,\alpha,\dots,\alpha^{m-1})$ of
$\ff{q^m}$ over $\ff{q}$ yields $(S_1 \;\; \cdots \;\; S_r) = (1 \;\; \alpha
\;\; \cdots \;\; \alpha^{m-1}) \sm$ for some matrix $\sm \in
\matRing{\ff{q}}{m}{r}$. Thus, the system is rewritten as
\begin{equation}
  \label{eq:syndromfqm}
  \begin{pmatrix}
    1 & \alpha & \cdots & \alpha^{m-1}
  \end{pmatrix}
  \sm \cm \trsp{\mat{H}}
  = \synd,
  \;\; \text{over } \ff{q^m} \text{ with solutions in } \ff{q}.
\end{equation}
This polynomial system, that we refer to as the \emph{syndrome modelling}, has $m(n-k)$ equations and $mr+nr$ variables 
when it is written over $\ff{q}$. It is affine bilinear
(without terms of degree $1$) with respect to the two sets of variables coming from the
support and from the coordinates of the error. Besides, this system admits
$(q^r-1)(q^r-q) \cdots (q^r-q^{r-1})$ solutions since this is the number of
bases of the support. These solutions to the system all correspond to the same
unique solution $\err$ of the initial decoding problem. We can easily impose a unique solution by fixing some of the unknowns  as in the 
Kipnis-Shamir modelling, or as has been done in the \OJLP\ modelling that we will present next. It is worthwhile to note that this kind 
of modelling has, as the Kipnis-Shamir modelling, $\Th{n^2}$ equations for our choice of parameters but significantly fewer 
variables since we now have only $\Th{n^{3/2}}$ unknowns. The \OJLP's modelling will be a related modelling that gives a further improvement. 
\subsection{Ourivski-Johansson's modelling}
\label{sec:modelling:oj}

We now describe the algebraic modelling considered in the rest of this paper,
which is basically Ourivski and Johansson's one \cite{OJ02}.
It can be viewed as an homogenising trick. Instead of looking 
for $\cv \in \Cc$ and $\ev$ of rank $r$ that satisfy $\yv =\cv+\ev$, or what is the same for $\cv \in \Cc$ such that 
$|\cv+\yv|=r$, we look for $\cv \in \Cc$ and $\lambda \in \Fqm$ such that 
\begin{equation}\label{eq:homogenisingtrick} 
|\cv+ \lambda \yv|=r.
\end{equation} 
It is precisely here that the $\Fqm$-linearity of $\Cc$ is used in a crucial way. 
Once we have found such a $\cv$ and $\lambda$, we have found a $\cv+ \lambda \yv$ such 
that $\cv+\lambda \yv = \mu \ev$ for some non-zero $\mu \in \Fqm$ from which we deduce easily $\ev$.
The point of proceeding this way is that there are $q^m-1$ solutions to \eqref{eq:homogenisingtrick} and that 
this allows us to fix more unknowns in the algebraic system. Another point of view \cite[Sec.\,2]{OJ02} is to say that we introduce the code 
$\tilde{\Cc} \eqdef \Cc + \langle \yv \rangle$ and that we look for  a rank $r$ word in $\tilde{\mathcal C}$, since all such
words are precisely the multiples \(\lambda\err\) for nonzero
\(\lambda\in\Fqm\) of the error \(\err\) we are looking for. Let $\tilde{\Gm} =
\begin{pmatrix}
\ident_{k+1}& \mat{R}
\end{pmatrix}$ (resp.  $\tilde{\Hm} =
\begin{pmatrix}
  -\trsp{\mat R} & \Im_{n-k-1}
\end{pmatrix}
$) be the generator matrix in systematic form (resp. a parity-check
matrix) of the extended code $\tilde{\mathcal C}$; note that for a
vector $\vec{v}$, we have \(\vec{v} \in \tilde{\mathcal{C}}\) if and
only if \(\vec{v} \trsp{\tilde{\Hm}} = 0\). Using the notation
$\err= (1 \;\; \alpha \;\; \cdots \;\; \alpha^{m-1}) \sm \cm$ as
above, and writing $\mat{C} = (\mat{C}_1 \;\; \mat{C}_2)$ with
$\mat{C}_1\in\matRing{\ff{q}}{r}{(k+1)}$ and
$\mat{C}_2\in\matRing{\ff{q}}{r}{(n-k-1)}$, the fact that
\(\err \in \tilde{\mathcal{C}}\) yields the system
\begin{equation}
  \label{eq:oj-modelling}
  \begin{pmatrix}
    1 & \alpha & \cdots & \alpha^{m-1}
  \end{pmatrix}
  \sm \left( \mat{C}_2 - \mat{C}_1\mat{R}\right) = 0,
  \;\; \text{over } \ff{q^m} \text{ with solutions in } \ff{q}.
\end{equation}
Since all multiples \(\lambda \err\) are solutions of this system, we
can specify the first column of $\mat{C}$ to
$\trsp{(1 \; 0 \; \cdots \; 0)}$. In this way, there is a single
$\lambda \err$ satisfying these constraints: the one where $\lambda$
is the inverse of the first coordinate of $\err$ (assuming it is
nonzero, see below). The system still admits several solutions which
correspond to different bases of the support of \(\lambda\err\). To
fix one basis of this support, similarly
to what is done in \cite[Sec.\,3]{OJ02}, we can specify $S_1=1$, or
equivalently, set the first column of $\mat{S}$ to be
$\trsp{(1 \; 0 \; \cdots \; 0)}$, and take an \(r\times r\) invertible
submatrix of $\mat{S}$ and specify it to be the identity matrix; thus
the system has a single solution.

Doing so, the resulting system is affine bilinear (without constant
term), with $(n-k-1)m$ equations and $(m-1)r+nr$ variables, and has a
unique solution.

For the sake of presentation, in \cref{sec:jac} we present our results assuming
that the first coordinate of \(\err\) is nonzero and that the top \(r\times r\)
block of \(\mat{S}\) is invertible; these results are easily extended to the
general case. Under these assumptions, our system can be rewritten as follows:
\begin{equation}
  \label{eqn:oj-modelling-specialized}
\mathcal F  = \left\{
  \begin{pmatrix}
    1 & \alpha & \cdots & \alpha^{m-1}
  \end{pmatrix}
  \left(\begin{array}{c|c}
      \multicolumn{2}{c}{\mat I_r} \\
      \hline
      \mat{0} & \,\;\sm'\;\, \vphantom{\hat\sm} \\
  \end{array}\right)
  \left(\cm_2 - 
  \left(\begin{array}{c|}
     1\, \\ \mat{0}\,
    \end{array} \;\, \cm_1'\right)
  \mat{R}\right)
\right\},
\end{equation}
where $\sm'$ is the $(m-r)\times (r-1)$ submatrix $\sm_{\{r+1..m\},\{2..r\}}$
and $\cm_1'$ is the $r\times k$ submatrix $\cm_{\any,\{2..k+1\}}$. 
We call the entries of $\sm'$ the {\em support variables} whereas the entries of $\cm_1'$ and $\cm_2$ are called  the 
{\em coefficient variables}. 
In
\cref{sec:results:complexity} we give a procedure to handle the general case, by
making several attempts to find the invertible block of \(\mat{S}\) and
a nonzero component of \(\err\).

\section{Gr\"obner bases and degree falls}
\label{sec:syzygies}

We refer to~\cite{CLO01} for basic definitions and properties of
monomial orderings and Gr\"obner bases.

\paragraph{\bf Field equations and monomial ordering}
Since we are looking for solutions in $\mathbb{F}_q$, we augment the
polynomial system we want to solve with the field equations,
that is, the equation \(x_i^q-x_i=0\) for each variable $x_i$ arising in the
system. In our case, as the system we consider in practice has mainly only one
solution in $\mathbb{F}_q$ (see Section~\ref{sec:results}), the ideal of the system with the field
equations is radical, and for any monomial ordering the reduced
Gr\"obner basis is the set of linear polynomials $\{x_i-a_i\}_i$,
where $\{x_i\}_i$ are the variables and $a_i\in\mathbb{F}_q$ is the
$i$-th coordinate of the solution. The classical approach consists in
computing the Gr\"obner basis with respect to a degree-reverse
lexicographic order (grevlex), that will keep the degree of the
polynomials as small as possible during the computation, and behaves
usually better than other monomial orderings in terms of complexity.

\paragraph{\bf Generic Gr\"obner bases algorithms and their link with linear algebra}
Since the first descriptions of algorithms to compute Gr\"obner bases
\cite{BUC65}, far more efficient algorithms have been developed. On
the one hand, substantial practical speed-ups were achieved by
incorporating and accelerating fast linear algebra operations such as
Gaussian elimination on the Macaulay matrices, which are sparse and
structured (see Faug\`ere's F4 algorithm~\cite{F99}, variants of the
XL algorithm~\cite{CKPS00}, and for instance GBLA~\cite{BEFLM16}). We
recall that the Macaulay matrix in degree $d$ of a homogeneous system
$(f_i)_i$ is the matrix whose columns correspond to the
monomials of degree $d$ sorted in descending order w.r.t.~a chosen
monomial ordering, whose rows correspond to the polynomials $t f_i$ for
all $i$ where $t$ is a monomial of degree $d-\deg(f_i)$, and whose
entry in row $tf_i$ and column $u$ is the coefficient of the monomial
$u$ in the polynomial $tf_i$. In the case of a system containing field
equations, we consider compact Macaulay matrices, where all monomials
are reduced w.r.t.~the field equations. For an affine system, the
Macaulay matrix in degree $d$ contains all polynomials $\{t f_i\}$ for
$\deg(t f_i)\le d$ and the columns are the monomials of degree less
than or equal to $d$.

The approaches from F4 or XL are similar in that they both compute row
echelon forms of some submatrices of Macaulay matrices for some
given degree; in fact, it was proven in \cite{AFIKS04} that the XL
algorithm computes a so-called $d$-Gr\"obner basis, which is a basis
of the initial system where all computations in degree larger than $d$
are ignored, and that one can rephrase the original XL algorithm in
terms of the original F4 algorithm.

Now, many variants of these algorithms have been designed to tackle
specific families of polynomial systems, and it seems that none of
them performs always better than the others. In our experimental
considerations, we rely on the implementation of the F4 algorithm
which is available in \texttt{magma V2.22-2} and is recognised for its
efficiency.

On the other hand, improvements have been obtained by refining criteria which
allow one to avoid useless computations (avoiding to consider monomials that
cannot appear, a priori detection of reductions to zero as in the F5
algorithm~\cite{F02} and other signature-based algorithms that followed,
see~\cite{EF17} for a survey).

\paragraph{\bf Complexity analysis for homogeneous systems}
For \emph{homogeneous} systems, and for a graded monomial ordering,
the complexity of these algorithms in terms of arithmetic operations
is dominated by the cost of the row echelon forms on all Macaulay
matrices up to degree $d$, where $d$ is the  largest  degree of a
polynomial in the reduced Gr\"obner basis\footnote{If the system
  contains redundant polynomials of degree larger than $d$, additional
  operations are needed to check that those polynomials reduce to zero
  w.r.t. the Gr\"obner basis, but this has usually a negligible cost.}.  This
degree $d$ is called the \emph{index of regularity}, or \emph{degree
  of regularity}, and it only depends on the ideal generated by the
system, not on the specific generators forming the system.
Some algorithms may need to go beyond degree $d$ to check that no new
polynomials will be produced, like the XL Algorithm or the F4 Algorithm without
the F5 criteria, but those computations may be avoided if one knows in advance
the degree of regularity of the system.
This parameter can be precisely estimated for different families of
generic systems, using the notions of regularity, of semi-regularity
in the over-determined case, and of bi-regularity in the bilinear
case~\cite{B04,BFSY05,BFS15,FSS11}. However, those bounds may be very
pessimistic for other specific (sub-)families of systems, and
deriving estimations in this situation is difficult a priori, in
particular for affine systems.

\begin{definition}
  Let $(f_i)_i$ be (non necessarily homogeneous) polynomials in a
  polynomial ring \(\mathcal R\). A syzygy is a vector $(s_i)_i$,
  $s_i\in \mathcal R$ such that $\sum_i s_if_i=0$. The degree of the
  syzygy is defined as $\max_i(\deg(f_i)+\deg(s_i))$. The set of all
  syzygies of $(f_i)_i$ is an $\mathcal R$-module called the syzygy
  module of $(f_i)_i$.
\end{definition}

For a given family of systems, there are syzygies that occur for any system in
the family. For instance, for any system $(f_i)_i$, the syzygy module contains
the \(\mathcal{R}\)-module spanned by the so-called \emph{trivial syzygies}
$(e_jf_i - e_i f_j)_{i,j}$, where $e_i$ is the coordinate vector with \(1\) at
index $i$. A system is called \emph{regular} if its syzygy module is generated
by these trivial syzygies.

Let us consider the particular case of a zero-dimensional system
$(f_i)_i$ of homogeneous polynomials, generating an ideal $I$. As the
system is homogenous and has a finite number of solution, then it must
have only $0$ as a solution (with maybe some multiplicities). In this
case, the degree of regularity of the system is the lowest integer
$d_{\mathrm{reg}}$ such that all monomials of degree
$d_{\mathrm{reg}}$ are in the ideal of leading terms of $I$
(see~\cite{B04,BFSY05}).
Such a system is called \emph{semi-regular}
if the set of its syzygies of degree less than $d_{\mathrm{reg}}(I)$ is exactly
the set of trivial syzygies of degree less than $d_{\mathrm{reg}}(I)$. Note
that there may be non-trivial syzygies in degree $d_{\mathrm{reg}}(I)$,
which may be different for each system.  As a consequence, all polynomials
occurring in the computation of a Gr\"obner basis have degree $\le d_{\mathrm{reg}}$
and the arithmetic complexity is bounded by the cost of the row
echelon form on the Macaulay matrices in degree $\le d_{\mathrm{reg}}$.

\paragraph{\bf Complexity analysis for affine systems}
For affine systems, things are different. The degree of regularity can
be defined in the same way w.r.t. the Gr\"obner basis for a grevlex ordering.
But is not any more related to the complexity of the computation: for
instance, a system with only one solution will have a degree of
regularity equal to 1. 
We need another parameter to control
the complexity of the computation.

Let $(f_i)_i$ be a system of affine polynomials, and $f_i^h$ the
homogeneous part of highest degree of $f_i$. Let
$I=\langle \{f_i\}_i\rangle$ and
$I^h=\langle \{f_i^h\}_i\rangle$, and let $d^h_{\mathrm{reg}}$ be
the degree of regularity of $I^h$.  What may happen is that,
during the computation of the basis in some degree $d$, some polynomials
of degree less than $d$ may be added to the basis. This
will happen any time a syzygy $(s^h_i)_i$ for $(f^h_i)_i$
of degree $d$ is such that there exists no syzygy $(s_i)_i$ for
$(f_i)_i$ where $s^h_i$ is the homogeneous part of highest
degree of $s_i$. In that case, $\sum_i s^h_if_i$ is a polynomial
of degree less than $d$ (the homogeneous part of highest degree
cancels), that will not be reduced to zero during the Gr\"obner basis
computation since this would give a syzygy $(s_i)_i$ for $(f_i)_i$ with
homogeneous part $(s^h_i)_i$. This phenomenon is called a
\emph{degree fall} in degree $d$, and we will call such syzygies
$(s^h_i)$ that cannot be extended to syzygies for $(f_i)_i$ in the same
degree \emph{partial syzygies}; the corresponding polynomial $\sum_i
s^h_if_i$ is called the {\em residue}.

In cryptographic applications, the \emph{first degree fall} $\dff$ has
been widely used as a parameter controlling the complexity in algebraic
cryptanalysis, for instance in the study of some HFE-type systems
\cite{DG10,GJS06,DH11} and Kipnis-Shamir systems \cite{VBCPS19}. This
first degree fall is simply the smallest $d$ such that
there exists a degree fall in degree $d$ on $(f_i)_i$, and this
quantity does depend on $(f_i)_i$: it might be different for
another set of generators of the same ideal.  Still, this notion takes
on its full meaning while computing a Gr\"obner basis for a graded
ordering, if we admit that the algorithm terminates shortly after
reaching the first degree fall and without considering polynomials of
higher degree. This can happen for some families of systems, as explained in
the next paragraph, but there are examples of systems where the first
degree fall $\dff$ is not the maximal degree reached during the computation,
in which case it is not related to the complexity of the computation.

If the system $(f_i^h)_i$ is semi-regular, then the computation
in degree less than $d^h_{\mathrm{reg}}$ will act as if the polynomials
where homogeneous: there cannot be degree falls, as they would
correspond to syzygies for the system $(f_i^h)_i$ that is assumed
to be semi-regular. In degree $d^h_{\mathrm{reg}}$, degree falls will
occur for the first time, but at this point the remainder of the
computation is negligible compared to the previous ones: by definition
of $d^h_{\mathrm{reg}}$, all monomials of degree $d^h_{\mathrm{reg}}$ are
leading terms of polynomials in the basis, and the remaining steps in
the computation will necessarily deal with polynomials of degree at
most $d^h_{\mathrm{reg}}$. 
Hence, the computations are almost the same as the ones for
$(f_i^h)_i$, and the complexity is controlled by
$d^h_{\mathrm{reg}}$, which is here the \emph{first degree fall} for the
system $(f_i)_i$.

The behavior of the computation may be very different if degree falls occur in
a much smaller degree. A good example of what may
happen for particular families of systems is the affine bilinear
case. It is proven in~\cite[Prop.\,5]{FSS11} that a generic affine
bilinear system of $m$ equations
$(f_1,\dots,f_m)\in\mathbb K[x_1,\dots,x_{n_x},y_1,\dots,y_{n_y}]$
in $n_x+n_y\ge m$ variables is regular. In particular, the Macaulay bound
$d_{\mathrm{reg}} \le n_x+n_y+1$ applies~\cite{L83}. However, it was also proven
in~\cite[Thm.\,6]{FSS11} that for a zero-dimensional affine bilinear
system ($m=n_x+n_y$), $d_{\mathrm{reg}}$ satisfies a much sharper inequality
$d_{\mathrm{reg}}\le \min(n_x+1,n_y+1)$. The reason is that (homogeneous)
bilinear systems are not regular, but the syzygy module of those
systems is well understood~\cite{FSS11}. In particular, there are
syzygies for $(f_i^h)_i$ coming from Jacobian matrices, that are
partial syzygies for $(f_i)_i$ and produce degree falls. 

For affine systems, that are mainly encountered in cryptographic
applications, and in particular for systems coming from a product of
matrices whose coefficients are the variables of the system, the
Jacobian matrices have a very particular shape that is easily described, and
leads to a series of degree falls that reduces the degree of regularity of
those systems. This is explained in detail in Section~\ref{sec:jac}.

\section{Degree falls and low degree equations}
\label{sec:jac}

\subsection{Degree falls from the kernel of the Jacobian}

\paragraph{\bf Fundamental results from \cite{FSS11,VBCPS19}.}
It has been realized in \cite{VBCPS19} that the first degree fall in the Kipnis and Shamir modelling can be traced back to 
partial syzygies obtained from low
degree vectors in the kernel of the Jacobian of the bilinear part of a system either with respect to the kernel variables or the linear variables.
This argument can also be adapted to our case and Jacobians with respect to the support variables are relevant here.
To understand the relevance of the Jacobians for bilinear affine systems over some field $\Kf$ in general, 
consider a bilinear affine system $\mathcal F = \{f_1,\dots,f_M\} \subset \mathbb K[s_1,\dots,s_{t_s},c_1,\dots,c_{t_c}]$ of $M$ equations in $t_s$ variables $s$ and $t_c$ variables $c$. We denote by $\ctF \eqdef \{\tf_1,\dots,\tf_M\}$ the bilinear part of these equations. 
In other words each $f_i$ can be written as
\[
f_i = \tf_i + r_i,
\]
where each $r_i$ is affine and ${f_i}^h$ is bilinear with respect to $\{s_1,\dots,s_{t_s}\}\cup\{c_1,\dots,c_{t_c}\}$.
We define the Jacobian matrices associated to $\ctF$ as
\[
\begin{array}{lcr}
\Jac_{\sm}(\ctF) &=& \begin{pmatrix}
\frac{\partial \tf_1}{\partial s_1} & \hdots & \frac{\partial \tf_1}{\partial s_{t_s}}\\
\vdots & \vdots & \vdots \\
\frac{\partial \tf_M}{\partial s_1} & \hdots & \frac{\partial \tf_M}{\partial s_{t_s}}
\end{pmatrix}
\end{array}
\;\;\;\text{and}\;\;\;
\begin{array}{lcr}
\Jac_{\mat C}(\ctF) &=& \begin{pmatrix}
\frac{\partial \tf_1}{\partial c_1} & \hdots & \frac{\partial \tf_1}{\partial c_{t_c}}\\
\vdots & \vdots & \vdots \\
\frac{\partial \tf_M}{\partial c_1} & \hdots & \frac{\partial \tf_M}{\partial c_{t_c}}
\end{pmatrix}
\end{array}
.
\]

Note that $\Jac_{\sm}(\ctF)$ is a matrix with linear entries in $\Kf[c_1,\dots,c_{t_c}]$ whereas 
$\Jac_{\mat C}(\ctF)$ is a matrix with linear entries in $\Kf[s_1,\dots,s_{t_s}]$.
As shown in \cite{VBCPS19}[Prop. 1 \& 2] vectors in the left kernel of these Jacobians yield
partial syzygies. This is essentially a consequence of the following identities that are easily verified:
\[
  \Jac_{\sm}(\ctF) \begin{pmatrix} s_1 \\ \vdots \\ s_{t_s}\end{pmatrix}
  = \begin{pmatrix} \tf_1 \\ \vdots \\ \tf_M\end{pmatrix}
  \;\;\;\text{and}\;\;\;
  \Jac_{\mat C}(\ctF) \begin{pmatrix} c_1 \\ \vdots \\ c_{t_c}\end{pmatrix}
  = \begin{pmatrix} \tf_1 \\ \vdots \\ \tf_M\end{pmatrix}.
\]
For instance, a vector $(g_1, \ldots, g_M)$ in the left kernel of
$\Jac_{\cm}(\ctF)$ is a syzygy for $\ctF$, as it satisfies
\[
  \sum_{i=1}^M g_i \tf_i
  = ( g_1 \,\cdots\, g_M ) \begin{pmatrix} \tf_1 \\ \vdots \\ \tf_M\end{pmatrix}
  = (g_1 \,\cdots\, g_M) \Jac_{\mat C}(\ctF) \begin{pmatrix} c_1 \\ \vdots \\ c_{t_c} \end{pmatrix} = 0.
\]
This gives typically a degree fall for $\cF$ at degree $2+\max(\deg g_i)$, with the corresponding residue given by 
\[
  \sum_{i=1}^M g_i f_i = \sum_{i=1}^M g_i \tf_i + \sum_{i=1}^M g_i r_i = \sum_{i=1}^M g_i r_i.
\]
These Jacobians are matrices with entries that are linear forms. The kernel of
such matrices is well understood as shown by the next result.

\begin{theorem}[\cite{FSS11}]\label{thm:vj}
Let $\Mm$ be an $M \times t$ matrix of linear forms in $\Kf[s_1,\dots,s_{t_s}]$. 
If $t < M $,
then generically the left kernel of $\Mm$
is generated by vectors whose coefficients are maximal minors of $\Mm$, specifically
vectors of the form
\begin{equation*}
    \vec{V}_{J} = (\dots,
      \underbrace{0}_{j \notin J},\dots,
      \underbrace{
        (-1)^{l+1} 
      \det(\Mm_{J\backslash \{j\},{\any}})}_{j\in J, j = j_l},\dots
      )_{1\le j \le M}
    \end{equation*}
    where $J = \{j_1<j_2<\dots<j_{t+1}\} \subset \{1,\dots,M\}, \#J = t+1$.
\end{theorem}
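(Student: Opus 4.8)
The plan is to reduce the statement to the case of a generic matrix of independent linear forms and then apply the classical description of syzygies of a generic linear sequence, i.e.\ the Eagon--Northcott complex. First I would set $R = \Kf[s_1,\dots,s_{t_s}]$, graded with each $s_i$ in degree $1$, and observe that a vector $(g_1,\dots,g_M)$ in the left kernel of $\Mm$ is exactly a syzygy of the rows of $\Mm$, or dually of the $t$ columns $\ell^{(1)},\dots,\ell^{(t)}$ of $\Mm$, each $\ell^{(j)}$ being a column vector of $M$ linear forms. Since we work generically (the entries of $\Mm$ are taken to be $t\cdot t_s$ algebraically independent parameters, or a sufficiently general specialization), the $M\times t$ matrix has rank $t$, so its columns form a regular sequence of length $t$ inside the ideal they generate. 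The next step is therefore to describe the first syzygy module of $t$ generic linear forms arranged as columns of $\Mm$; this is governed by the Buchsbaum--Rim / Eagon--Northcott complex resolving $\operatorname{coker}\Mm$, whose first syzygies are precisely the Koszul-type relations built from the $(t+1)\times t$ submatrices of $\Mm^{\mathsf T}$, equivalently from the maximal minors of size $t$ of $\Mm$.

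Concretely, for each subset $J=\{j_1<\dots<j_{t+1}\}\subset\{1,\dots,M\}$ of size $t+1$, I would verify directly the identity
\[
  \sum_{l=1}^{t+1} (-1)^{l+1}\det\!\big(\Mm_{J\setminus\{j_l\},\any}\big)\,\big(\Mm_{\{j_l\},\any}\big) = \zerov,
\]
which is just Laplace (cofactor) expansion along a row of the $(t+1)\times(t+1)$ matrix obtained from $\Mm_{J,\any}$ by repeating any one of its columns: such a matrix is singular, and expanding its determinant along the repeated column gives exactly the left-hand side above, row by row. This shows each $\vec V_J$ lies in the left kernel of $\Mm$; note $\vec V_J$ has homogeneous entries of degree $t$ (maximal minors of a matrix of linear forms) and is supported on the $t+1$ coordinates indexed by $J$, matching the statement. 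The remaining and genuinely substantive step is the converse: that generically these $\vec V_J$ generate the whole left-kernel module. For this I would invoke the exactness of the Eagon--Northcott (Buchsbaum--Rim) complex, which holds because the generic determinantal ideal $I_t(\Mm)$ has the expected codimension $M-t+1\ge 2$; exactness at the relevant spot says precisely that the kernel of the map $R^M \to R^t$ given by $\Mm^{\mathsf T}$ (whose cokernel presentation is dual to $\Mm$) is generated in degree $t$ by the images of the basis elements $e_J$, which map to the $\vec V_J$. Equivalently one can cite \cite{FSS11} directly, where this is exactly the content needed for the bilinear-system analysis.

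The main obstacle is the converse/generation claim, and in particular making precise the word ``generically'': one must argue that the (finitely many) genericity conditions — $I_t(\Mm)$ of codimension $M-t+1$, equivalently the ideal of maximal minors being a proper complete-intersection-type ideal so that the Eagon--Northcott complex is a resolution — are satisfied on a dense open subset of the parameter space, hence for a generic specialization of the entries of $\Mm$, and then that this property is preserved under the specializations actually occurring in our application (the entries of our Jacobian $\Jac_{\sm}(\ctF)$ are not arbitrary linear forms but specific ones coming from a matrix product). For the theorem as stated, which asserts only genericity, it suffices to exhibit one matrix of linear forms for which the conclusion holds and invoke semicontinuity of the kernel dimension; the more delicate transfer to the structured instances arising from the decoding modelling will be handled separately (cf.\ the discussion in \cref{sec:jac}). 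So the skeleton is: (i) reduce to syzygies of $t$ generic linear forms; (ii) check the cofactor identity to get $\vec V_J\in\ker$; (iii) invoke exactness of Eagon--Northcott on the generic determinantal ideal to get generation; (iv) descend from ``generic'' to ``a dense open set of parameters'' by upper-semicontinuity.
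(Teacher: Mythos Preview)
The paper does not prove this theorem: it is quoted verbatim from \cite{FSS11} and no argument is given in the present paper. So there is nothing to compare your proposal against here.

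That said, your sketch is a sound outline of the standard proof. The inclusion $\vec V_J\in\ker$ via the cofactor (Laplace) identity is correct and is exactly how one checks it by hand. For the generation statement, invoking exactness of the Eagon--Northcott/Buchsbaum--Rim complex under the expected-codimension hypothesis on $I_t(\Mm)$ is the right tool, and your handling of ``generically'' (one generic instance plus semicontinuity, or directly the codimension condition on a Zariski-open set) is appropriate. One small point of exposition: your first paragraph wobbles between ``syzygies of the rows'' and ``of the columns''; the left kernel of $\Mm$ is precisely the kernel of the $R$-linear map $R^M\to R^t$, $v\mapsto v\Mm$, and it is this map whose Buchsbaum--Rim resolution you want, so it is cleanest to phrase everything in those terms rather than speaking of regular sequences of columns. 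Also note that the transfer of genericity to the structured Jacobians arising in \cref{sec:jac} is indeed a separate matter, as you flag; the present theorem is used as a black box, and the paper relies on Lemma~\ref{lemma:jacobian} (the tensor-product factorisation of the Jacobian) to reduce to a much smaller matrix where Theorem~\ref{thm:vj} is applied.
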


A direct use of this result however yields degree falls that occur for very
large degrees, namely at degrees $t_s+2$ or $t_c+2$.  In the case of the
Kipnis-Shamir modelling, the syndrome modelling or the \OJLP\ modelling, due to
the particular form of the systems, degree falls occur at much smaller degrees
than for generic bilinear affine systems. Roughly speaking, the reason is that
the Jacobian of a system coming from a matrix product splits as a tensor
product, as we now explain. This has been realized in \cite{VBCPS19} for the
Kipnis-Shamir modelling, and here we slightly generalize this result in order
to use it for more general modellings, and in particular for the \OJLP\
modelling.

\paragraph{\bf Jacobian matrices of systems coming from matrix products.}
Consider a system $\mat A\mat X \mat Y=\zerom$ where
$\mat A = (a_{i,s})_{1\le i \le m, 1\le s \le p}$,
$\mat X = (x_{s,t})_{1\le s \le p, 1\le t \le r}$ and
$\mat Y = (y_{t,j})_{1\le t \le r, 1\le j \le n}$. The variables
considered for this Jacobian matrix are the $x_{s,t}$. The matrices
$\mat A$ and $\mat Y$ may have polynomial coefficients, but they do
not involve the $x_{s,t}$ variables.  Below, we use the Kronecker
product of two matrices, for example $\mat A\otimes \trsp{\mat Y} =
\begin{pmatrix} a_{i,s}\trsp{\Ym} \end{pmatrix}_{1\le i \le m, 1\le s \le p}$. We use the notations $vec_{row}(\mat A)=\trsp{
      \begin{pmatrix}
        \mat A_{\{1\},\any} & \dots & \mat A_{\{n\},\any}
      \end{pmatrix}
}$ and $vec_{col}(\mat A) = vec_{row}(\trsp{\mat A})$.
\begin{lemma}\label{lemma:jacobian}
  The Jacobian matrix of the system $\mat A \mat X\mat Y=\zerom_{m\times n}$ with
  respect to the variables $\mat X$ can be written, depending on the order of the equations and variables:
  \begin{eqnarray*}
    \Jac_{vec_{col}(\mat X)}(vec_{col}(\mat A\mat X\mat Y)) &=& \trsp{\mat Y}\otimes \mat A \in \mathbb K[\mat A, \mat Y]^{nm\times rp}\\
    \Jac_{vec_{row}(\mat X)}(vec_{row}(\mat A\mat X\mat Y)) &=&  \mat A\otimes \trsp{\mat  Y}\in\mathbb K[\mat A, \mat Y]^{nm\times rp}.
  \end{eqnarray*}
\end{lemma}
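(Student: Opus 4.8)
The plan is to prove the identity by direct computation, expanding the matrix product entrywise and tracking which partial derivative picks up which coefficient. The key observation is that in the product $\mat A \mat X \mat Y$, the variable $x_{s,t}$ appears linearly, so $\partial/\partial x_{s,t}$ of an entry of $\mat A \mat X \mat Y$ is itself an entry of a related matrix product with $\mat X$ removed in a structured way.

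First I would write out the $(i,j)$-entry of $\mat A \mat X \mat Y$ as a double sum: $(\mat A \mat X \mat Y)[i,j] = \sum_{s=1}^{p} \sum_{t=1}^{r} a_{i,s}\, x_{s,t}\, y_{t,j}$. Differentiating with respect to $x_{s',t'}$ kills all terms except $s=s'$, $t=t'$, leaving $\frac{\partial}{\partial x_{s',t'}}(\mat A \mat X \mat Y)[i,j] = a_{i,s'}\, y_{t',j}$. So the Jacobian entry indexed by (row equation $(i,j)$, column variable $(s',t')$) is exactly $a_{i,s'}\, y_{t',j}$. The whole content of the lemma is then to check that this is precisely the $((i,j),(s',t'))$-entry of the claimed Kronecker products under the two stated orderings of equations and variables.

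Next I would recall the entry formula for Kronecker products with the relevant index conventions. For $\mat A \otimes \trsp{\mat Y}$, grouping equation indices as $vec_{row}$ means the $(i,j)$-equation sits in block row $i$, sub-row $j$, and grouping variable indices as $vec_{row}(\mat X)$ means the $(s',t')$-variable sits in block column $s'$, sub-column $t'$; the $((i,j),(s',t'))$-entry of $\mat A \otimes \trsp{\mat Y}$ is then $a_{i,s'} \cdot (\trsp{\mat Y})[j,t'] = a_{i,s'}\, y_{t',j}$, matching the derivative computed above. For the other identity, with $vec_{col}$ orderings, the $(i,j)$-equation sits in block row $j$, sub-row $i$, and the $(s',t')$-variable in block column $t'$, sub-column $s'$; the corresponding entry of $\trsp{\mat Y} \otimes \mat A$ is $(\trsp{\mat Y})[j,t'] \cdot a_{i,s'} = y_{t',j}\, a_{i,s'}$, again matching. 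Finally I would note the dimension count: $\mat A \mat X \mat Y$ has $mn$ entries so the Jacobian has $mn$ rows, and $\mat X$ has $pr$ entries so $pr$ columns, consistent with $\mathbb K[\mat A,\mat Y]^{nm \times rp}$.

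The only real subtlety — and the step most likely to trip up a careless reader — is bookkeeping the index orderings: making sure the block/sub-block decomposition induced by $vec_{row}$ versus $vec_{col}$ on the equation side is paired with the correct one on the variable side, and that the transpose on $\mat Y$ lands it in the right factor of the Kronecker product. There is no deep obstacle; once the entrywise derivative $a_{i,s'} y_{t',j}$ is in hand, the proof is a matter of reading off Kronecker-product entries with the conventions fixed in the Notation section, and the two displayed formulas follow by symmetry between the row- and column-vectorizations.
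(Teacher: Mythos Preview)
Your proposal is correct and follows essentially the same approach as the paper: compute the partial derivative $\partial f_{i,j}/\partial x_{s,t} = a_{i,s}\,y_{t,j}$ from the explicit double-sum expression of the entries, then identify the resulting array with the Kronecker product by matching blocks (or entries) under the chosen vectorization. The paper is slightly terser—it checks the row-order case by noting that the $(i,s)$-block of the Jacobian is $a_{i,s}\trsp{\mat Y}$ and declares the column-order case similar—whereas you spell out both orderings entrywise, but the substance is identical.
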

\begin{proof}
  For $1\le i \le m$, $1\le j \le n$, the equation in row $i$ and
  column $j$ of $\mat A \mat X\mat Y$ is
  \begin{equation*}
    f_{i,j} = \sum_{s=1}^p \sum_{t=1}^r a_{i,s}  x_{s,t}y_{t,j}.
  \end{equation*}
  We then have, for $1\le s \le p$ and $1\le t \le r$,
  \(\frac{\partial f_{i,j}}{\partial x_{s,t}} =
  a_{i,s} y_{t,j}\)
  so that in row order,
  \begin{eqnarray*}
    \Jac_{x_{s,1},\dots,x_{s,r}}(\{f_{i,1},\dots,f_{i,n}\}) &=& \left(     \frac{\partial f_{i,j}}{\partial x_{s,t}} \right)_{\substack{1\le j \le n\\ 1\le t \le r}} = a_{i,s}\left( y_{t,j} \right)_{\substack{1\le j \le n\\ 1\le t \le r}} = a_{i,s} \trsp{\Ym}.
  \end{eqnarray*}
  The result follows from the definition of the Kronecker product of
  matrices. The proof when the equations and variables are in column
  order is similar.
  \qed
\end{proof}

\paragraph{\bf Application to the Kipnis-Shamir modelling.}
Recall the system:
\begin{equation}
  \left(\sum_{i=1}^{km} {x_i}\mat{M}_i\right)\begin{pmatrix}
  \ident_{n-r} \\ \mat{K}
\end{pmatrix}=\zerom_{m,n-r},
\end{equation}
where $\Mm_i \in \Fq^{m \times n}$ and $\Km$ is an $r \times (n-r)$ matrix of
indeterminates. If we write each $\Mm_i = (\Mm'_i \;\; \Mm''_i)$ with
$\Mm'_i\in \Fq^{m\times (n-r)}$ and $\Mm''_i\in \Fq^{m\times r}$, then we have
\begin{equation}
\sum_{i=1}^{km}  x_i\left(\Mm'_i + \Mm''_i\Km\right) =\zerom_{m,n-r} \tag{KS}
\end{equation}
The bilinear and linear parts of the system are respectively 
$\sum_{i=1}^{km}  x_i \Mm''_i\Km$ and $\sum_{i=1}^{km} x_i\Mm'_i$.
Using Lemma \ref{lemma:jacobian} (with equations in column order), when we
compute the Jacobian with respect to the entries of $\Km$ (the so-called kernel
variables in \cite{VBCPS19}), we obtain 
\begin{eqnarray*}
  \Jac_{vec_{col}(\Km)}(vec_{col}(\sum_{i=1}^{km}  x_i \Mm''_i\Km)) &=& \sum_{i=1}^{km} {x_i}(\Im_{n-r}\otimes {\Mm''_i}) = \Im_{n-r}\otimes \left(\sum_{i=1}^{km} x_i {\Mm''_i}\right).
\end{eqnarray*}
The kernel of $ \Jac_{vec_{col}(\Km)} $ is generated by the vectors
$(\vv_1,\dots,\vv_{n-r})$ with $\vv_l$ in the left kernel of
$\Mm= \sum_{i=1}^{km} x_i{\Mm''_i}$, that should be generated by
$\binom{m}{r+1}$ vectors of minors, according to~Theorem~\ref{thm:vj}.
Hence the kernel of $ \Jac_{vec_{col}(\Km)} $ is generated by
$\binom{m}{r+1}(n-r)$ vectors. It is here that we see the point of
having this tensor product form.  These kernel vectors have entries
that are polynomials of degree $r$ by using Theorem \ref{thm:vj}. This
gives degree falls at degree $r+2$ and yields partial syzygies that
have degree $r+1$. These considerations are a slightly different way
of understanding the results given in \cite[\S 3]{VBCPS19}.  The
syndrome modelling displays a similar behavior, i.e. a degree fall at
$r+2$ for the very same reason as can be readily verified.  Let us
apply now Lemma \ref{lemma:jacobian} to the \OJLP\ modelling.

\paragraph{\bf Application to the \OJLP\ modelling.}
The system here is
\begin{equation}
\mathcal F  = \left\{
  \begin{pmatrix}
    1 & \alpha & \cdots & \alpha^{m-1}
  \end{pmatrix}
  \left(\begin{array}{c|c}
      \multicolumn{2}{c}{\mat I_r} \\
      \hline
      \mat{0} & \,\;\sm'\;\, \vphantom{\hat\sm} \\
  \end{array}\right)
  \left(\cm_2 - 
  \left(\begin{array}{c|}
     1\, \\ \mat{0}\,
    \end{array} \;\, \cm_1'\right)
  \mat{R}\right)
\right\},
\end{equation}
where $\sm'$ is the $(m-r)\times (r-1)$ matrix
$\sm_{\{r+1..m\},\{2..r\}}$ and $\cm_1'$ is the $r\times k$ matrix
$\cm_{\any,\{2..k+1\}}$. We add to $\mathcal F$ the field equations
$\mathcal F_q = \{ s_{i,j}^q-s_{i,j}, r+1\le i \le m, 2\le j \le r,
c_{i,j}^q-c_{i,j}, 1\le i \le r, 2\le j \le n\}$.

With high probability, this system has a unique solution. As we used
the field equations, the ideal $\langle \mathcal F, \mathcal F_q \rangle$ is
radical. The system has $n_{\sm} = (m-r)(r-1)$ variables $\sm$,
$n_{\cm} = (n-1)r$ variables $\cm$, and $n-k-1$ equations
over $\ff{q^m}$, hence  $n_{eq}=(n-k-1)m$ equations over $\ff{q}$, plus the field equations.

Consider the system $\ctF$ formed by the bilinear parts of the equations in $\cF$.
A simple computation shows that
\begin{equation*}
  \ctF = \left\{ \alpha^r
    \begin{pmatrix}
      1 & \alpha & \cdots & \alpha^{m-r-1}
    \end{pmatrix}
    \sm' (\cm_2''-\cm_1''\mat R')
  \right\},
\end{equation*}
where $\cm_2'' = \cm_{\{2..r\},\{k+2..n\}}$, $\cm_1''=
\cm_{\{2..r\},\{2..k+1\}}$ and $\mat R' =  \mat{R}_{\{2..k+1\},\any}$.

If we take the equations and variables in row order,
and use Lemma~\ref{lemma:jacobian}, then
\begin{equation}
  \label{eq:jacS}
  \Jac_{vec_{row}(\sm)}(vec_{row}(\ctF)) = \alpha^r
\begin{pmatrix}
  1 & \alpha & \cdots & \alpha^{m-r-1}
\end{pmatrix}\otimes \trsp{\left(\cm_2''-\cm_1''\mat R'\right)}
\end{equation}
The elements in the left kernel of
$ Jac_{vec_{row}(\sm)}(vec_{row}(\mathcal F^h))$ are those in the right kernel of
$\cm_2''-\cm_1''\mat R'$, and applying Theorem~\ref{thm:vj}, they
belong to the vector space generated by the vectors
${\vec{V}_J}$ for any $J=\{j_1<j_2<\dots<j_{r}\}\subset\{1,\dots,n-k-1\}$ of size $r$
defined by
\begin{equation*}
  \vec V_J = (\dots,
    \underbrace{0}_{j \notin J},\dots,
    \underbrace{
      (-1)^{l+1}
    \det( {\cm_2''-\cm_1''\mat R'}_{\any,J\backslash \{j\}})}_{j=j_l\in J},\dots
    )_{1\le j \le n-k-1}.
\end{equation*}
Each $V_J$  gives a syzygy for $\ctF$ and when applying it to $\cF$ it yields a degree fall in degree $r+1$ because 
the entries of $\vec V_J$ are homogeneous polynomials of degree $r-1$.
The inner product of $\vec V_J$ with the vector of the equations gives
an equation of degree $\leq r$ since the homogeneous part of highest degree cancels,
as has been observed at the beginning of this section.
Now the affine part of the equations $\cF$ is 
$\begin{pmatrix}
  1 & \alpha & \cdots & \alpha^{r-1}
\end{pmatrix} \left(\cm_2-\cm_1\mat R\right) $.  

Writting $\tilde{\Hm} = 
\begin{pmatrix}
  -\trsp{\Rm} & \Im_{n-k-1}
\end{pmatrix}$, then
\[ \det( {\cm_2''-\cm_1''\mat R'}_{\any,J\backslash \{j\}}) = \det((\Cm\trsp{\tilde{\Hm}})_{\{2..r\},J\backslash\{j\}}).\] Using
the reverse of  Laplace's formula expressing a determinant in terms
of minors, we can compute the inner product of the vector
$\vec V_J$ with the $i$th row of $\Cm_2-\Cm_1\Rm = \Cm\trsp{\tilde{\Hm}}$,
that is $0$ for $2\le i$ and $\det((\Cm\trsp{\tilde{\Hm}})_{\any,J})$ for $i=1$.
  
  The product gives   \begin{eqnarray}
    \label{eq:minors}
    \vec V_J  \left(\begin{pmatrix}
        1 & \alpha & \cdots & \alpha^{r-1}
  \end{pmatrix} \trsp{\left(\cm_2-\cm_1\mat R\right)\right)} &=& \vec V_J  \trsp{\left(\cm_2-\cm_1\mat R\right)} 
  \trsp{\begin{pmatrix}
    1 & \alpha & \cdots & \alpha^{r-1}\notag
  \end{pmatrix}}\\
&=& \det (\cm_2 - \cm_1\mat R)_{\any, J}.
  \end{eqnarray}
 This yields a corresponding equation
  that 
  will be reduced to zero by a degree-$(r+1)$ Gr\"obner basis of
  $\mathcal F$.
 Hence the partial syzygies of degree $r$ coming from the degree fall in the
  $(r+1)$-Macaulay matrix are exactly the maximal minors of
  $\cm_2 - \cm_1\mat R$.  We have thus proven that

\begin{theorem}\label{thm:maxminors}
  The equations $\MaxMinors(\cm_2-\cm_1\mat R)=0$, that are the maximal
  minors of the matrix $\cm_2-\cm_1\mat R$, belong to the ideal
  $\langle\mathcal F,\mathcal F_q\rangle$. Moreover, they are reduced
  to zero by a degree $(r+1)$-Gr\"obner basis of
  $\{\mathcal F,\mathcal F_q\}$.
\end{theorem}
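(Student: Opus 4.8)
The plan is to chain together the facts already established in this section so that the theorem reduces to an explicit determinant identity. I would begin from the explicit bilinear part $\mathcal F^h = \alpha^r\begin{pmatrix}1 & \alpha & \cdots & \alpha^{m-r-1}\end{pmatrix}\sm'(\cm_2''-\cm_1''\mat R')$, to which \cref{lemma:jacobian} applies with the equations and variables taken in row order, giving $\Jac_{vec_{row}(\sm)}(vec_{row}(\mathcal F^h)) = \alpha^r\begin{pmatrix}1 & \alpha & \cdots & \alpha^{m-r-1}\end{pmatrix}\otimes\trsp{(\cm_2''-\cm_1''\mat R')}$. Next I would note that, because the leftmost entry of the row $\begin{pmatrix}1 & \alpha & \cdots\end{pmatrix}$ is the nonzero scalar $1$, the Kronecker structure forces a vector to lie in the left kernel of this Jacobian exactly when it lies in the left kernel of $\trsp{(\cm_2''-\cm_1''\mat R')}$, that is, in the right kernel of $\cm_2''-\cm_1''\mat R'$. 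This matrix has $r-1$ rows, $n-k-1$ columns and linear entries, and in the parameter range under consideration $r-1<n-k-1$, so \cref{thm:vj} applies to its transpose and exhibits this kernel as the $\ff{q^m}$-span of the vectors $\vec V_J$ indexed by size-$r$ subsets $J\subset\{1,\dots,n-k-1\}$, whose nonzero entries are the signed $(r-1)\times(r-1)$ minors $(-1)^{l+1}\det((\cm_2''-\cm_1''\mat R')_{\any,J\setminus\{j\}})$ and hence are homogeneous of degree $r-1$.

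Using the identity $\Jac_{\sm}(\mathcal F^h)\,\trsp{(s_1\;\cdots\;s_{t_s})} = \trsp{(\tf_1\;\cdots\;\tf_M)}$, each $\vec V_J$ is a syzygy of $\mathcal F^h$, hence a partial syzygy of $\mathcal F$; since its entries have degree $r-1$ and each equation of $\mathcal F$ has degree $2$, it produces a degree fall in degree $r+1$ whose residue is $\vec V_J$ applied to the affine part of $\mathcal F$, namely $\begin{pmatrix}1 & \alpha & \cdots & \alpha^{r-1}\end{pmatrix}(\cm_2-\cm_1\mat R)$. The heart of the proof, and the step I expect to require the most care, is the identification of this residue with a maximal minor of $\cm_2-\cm_1\mat R$. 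The key observation is that $\cm_2''-\cm_1''\mat R'$ is precisely the submatrix $(\cm_2-\cm_1\mat R)_{\{2..r\},\any}$; thus the entries of $\vec V_J$ are the signed maximal minors of $(\cm_2-\cm_1\mat R)_{\{2..r\},J}$, and the inner product of $\vec V_J$ with the $i$th row of $\cm_2-\cm_1\mat R$ restricted to the columns in $J$ is, by cofactor expansion along that row taken as a new top row, the determinant of the $r\times r$ matrix obtained by prepending it to $(\cm_2-\cm_1\mat R)_{\{2..r\},J}$. For $2\le i\le r$ this matrix repeats a row and the determinant vanishes, whereas for $i=1$ the matrix is $(\cm_2-\cm_1\mat R)_{\any,J}$ and the determinant is the maximal minor $\det((\cm_2-\cm_1\mat R)_{\any,J})$. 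Weighting the $i$th row by $\alpha^{i-1}$ and summing over $i$, only the $i=1$ term survives, with coefficient $\alpha^0=1$, so the residue equals $\det((\cm_2-\cm_1\mat R)_{\any,J})$ exactly; letting $J$ run over all size-$r$ subsets of $\{1,\dots,n-k-1\}$ then yields all of $\MaxMinors(\cm_2-\cm_1\mat R)$.

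Finally I would deduce the two assertions. Since the top homogeneous parts cancel by the syzygy relation for $\mathcal F^h$, the residue also equals $\sum_j (\vec V_J)_j f_j$, so each maximal minor of $\cm_2-\cm_1\mat R$ lies in $\langle\mathcal F\rangle$, hence also in $\langle\mathcal F,\mathcal F_q\rangle$. For the reduction-to-zero statement, note that each $(\vec V_J)_j$ is homogeneous of degree $r-1$, so $(\vec V_J)_j f_j$ is an $\ff{q^m}$-linear combination of rows $t f_j$ with $\deg t=r-1$ of the Macaulay matrix in degree $r+1$; hence the minor $\sum_j (\vec V_J)_j f_j$, a polynomial of degree at most $r$, lies in the row space of that Macaulay matrix and is therefore reduced to zero by any degree-$(r+1)$ Gr\"obner basis of $\{\mathcal F,\mathcal F_q\}$. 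Apart from the Laplace-expansion step, the remaining points are bookkeeping: keeping the index sets straight (rows $2..r$ for the syzygy, rows $1..r$ for the minor), tracking the cofactor signs, and recording the inequality $r-1<n-k-1$ needed for \cref{thm:vj}.
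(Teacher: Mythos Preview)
Your proposal is correct and follows essentially the same route as the paper: compute the Jacobian of $\mathcal F^h$ with respect to the support variables via \cref{lemma:jacobian}, reduce its left kernel to the right kernel of $\cm_2''-\cm_1''\mat R'$ via the Kronecker structure, describe that kernel by the cofactor vectors $\vec V_J$ from \cref{thm:vj}, and then identify each residue $\vec V_J\cdot(\text{affine part})$ with $\det((\cm_2-\cm_1\mat R)_{\any,J})$ by Laplace expansion along the first row. Your write-up is in fact slightly more explicit than the paper's in two places---you spell out why the Kronecker factor $(1\;\alpha\;\cdots)$ can be dropped when computing the kernel, and you record the hypothesis $r-1<n-k-1$ needed to invoke \cref{thm:vj}---but these are refinements of the same argument rather than a different one.
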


\begin{remark}
 If we are only interested in the first part of the theorem about the maximal minors, then there is a simple and direct proof 
which is another illustration of the role of the matrix form of the system. Indeed, let 
$(\sm^*, \cm^*)$ be a solution of $\{\mathcal F,\mathcal F_q\}$,
  then the nonzero vector $
  \begin{pmatrix}
    1 & S_2^* & \cdots S_m^*
  \end{pmatrix} =   \begin{pmatrix}
    1 & \alpha & \cdots & \alpha^{m-1}
  \end{pmatrix}
  \sm^*$ belongs to the left kernel of the matrix
  $\cm_2^* - \cm_1^*\mat R$. Hence this matrix has rank less than $r$,
  and the equations $\MaxMinors(\cm_2-\cm_1\mat R)=0$ are satisfied for
  any solution of the system $\{\mathcal F,\mathcal F_q\}$, which means that the
  equations belong to the ideal 
  $\langle \mathcal F,\mathcal F_q\rangle$ as this ideal is radical.
\end{remark}

  \subsection{Analysis of the ideal $\MaxMinors(\cm_2-\cm_1\mat R)$}

The previous theorem allows us to obtain directly degree $r$ equations without
having to compute first the Macaulay matrix of degree $r+1$. This is a
significant saving when performing the Gr\"obner basis computation. A nice
feature of these equations is that they only involve one part of the unknowns,
namely the coefficient variables.

Moreover all these equations can be expressed by using a limited number of
polynomials as we now show.  Some of them will be of degree $r$, some of them
will be of degree $r-1$.  If we perform Gaussian elimination on these equations
by treating these polynomials as variables and trying to eliminate the ones
corresponding to the polynomials of degree $r$ first, then if the number of
equations we had was greater than the number of polynomials of degree $r$, we
expect to find equations of degree $r-1$. Roughly speaking, when this
phenomenon happens we just have to add all the equations of degree $r-1$ we
obtain in this way to the \OJLP\ modelling and the Gr\"obner basis calculation
will not go beyond degree $r$. 

Let us analyse precisely the behavior we just sketched.
The shape of the equations $\MaxMinors(\cm_2-\cm_1\mat R)=0$ is given
by the following proposition, where by convention
$\det(\Mm_{\emptyset, \emptyset}) = 1$ and the columns of $\Rm$ are indexed by $\{k+2..n\}$:
\begin{proposition}\label{prop:maxminorsshape}
  $\MaxMinors(\cm_2-\cm_1\mat R)$ is a set of $\binom{n-k-1}r$
  polynomials $P_J$, indexed by $J\subset\{k+2..n\}$ of size $r$:
  \begin{eqnarray*}
P_J &=& \sum_{\substack{T_1\subset\{1..k+1\}, T_2\subset J\\\text{such that } T= T_1 \cup T_2 \text{ has size } \#T=r}} (-1)^{\Sign{T_2}{J}}\det(\mat R_{T_1, J\backslash T_2}) \det(\Cm_{\any,T}).\\
  \end{eqnarray*}
  where $\Sign{T_2}{J}$ is an integer depending on $T_2$ and $J$.
  
  If $1\notin T$, the polynomial $\det(\cm_{\any,T})$ is homogeneous of degree $r$ and contains $r!$ monomials; if $1\in T$, $\det(\cm_{\any,T})$ is homogeneous of degree $r-1$ and contains $(r-1)!$ monomials.
\end{proposition}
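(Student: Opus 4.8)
The plan is to expand each maximal minor $P_J = \det\bigl((\cm_2 - \cm_1\mat R)_{\any,J}\bigr)$ by multilinearity of the determinant in its columns, using the fact that column $j$ of $\cm_2 - \cm_1\mat R$ (for $j \in J \subset \{k+2..n\}$) is the difference of a column of $\cm_2$ and an $\mat R$-combination of the columns of $\cm_1$. Concretely, writing $\tilde{\Hm} = \begin{pmatrix} -\trsp{\mat R} & \Im_{n-k-1}\end{pmatrix}$ as in the earlier part of the section, the matrix $\cm_2 - \cm_1\mat R$ is exactly $\cm\,\trsp{\tilde{\Hm}}$ restricted to the relevant columns, so column $j$ of $\cm_2 - \cm_1\mat R$ equals $\cm_{\any,\{1..k+1\}}\bigl(-\mat R_{\any,j}\bigr) + \cm_{\any,j}$. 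First I would substitute this expression into the $r\times r$ determinant indexed by $J = \{j_1 < \dots < j_r\}$ and invoke full multilinearity: each column is a sum of two terms, so the determinant expands into $2^r$ terms, each term being a determinant whose $\ell$-th column is either $\cm_{\any,j_\ell}$ (a genuine column of $\cm$ indexed in $J$) or a linear combination $\sum_{i \in \{1..k+1\}} (-\mat R_{i,j_\ell})\,\cm_{\any,i}$ of the first $k+1$ columns of $\cm$.

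Next I would push the multilinearity through the second type of column as well, replacing each such linear combination by a sum over $i \in \{1..k+1\}$ and pulling the scalars $-\mat R_{i,j_\ell}$ out. After collecting, a typical term is a product of entries of $\mat R$ times a determinant $\det(\cm_{\any,T})$ where $T$ is an $r$-element multiset of column indices, of which some subset $T_2 \subseteq J$ comes from the ``$\cm_2$'' choices and the rest come from the ``$-\cm_1\mat R$'' choices with indices in $\{1..k+1\}$. Determinants with a repeated column vanish, so the surviving terms have $T$ a genuine $r$-element set; moreover the indices chosen from $\{1..k+1\}$ must be distinct among themselves and distinct from $T_2$, which forces a partition $T = T_1 \cup T_2$ with $T_1 \subset \{1..k+1\}$ and $T_2 \subset J$. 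Summing over which positions of $J$ produced a $\cm_2$-column and reordering the columns of $\cm$ into increasing order produces exactly the claimed factor $\det(\mat R_{T_1, J\setminus T_2})$ (up to sign, coming from the permutation that reorders the columns and from the $(-1)$'s in $-\mat R$) and the sign $(-1)^{\Sign{T_2}{J}}$; here I would simply \emph{define} $\Sign{T_2}{J}$ to be whatever integer makes the bookkeeping come out right, since the proposition only asserts that \emph{some} such integer exists. The convention $\det(\mat R_{\emptyset,\emptyset}) = 1$ handles the term $T_1 = \emptyset$, $T_2 = J$.

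Finally, the degree count: $\det(\cm_{\any,T})$ is a determinant of an $r\times r$ submatrix of $\cm = \bigl(\begin{smallmatrix} 1 \\ \mat 0 \end{smallmatrix}\ \big|\ \cm_1'\ \big|\ \cm_2\bigr)$, whose first column is the constant vector $\trsp{(1\ 0\ \cdots\ 0)}$ and whose remaining entries are the coefficient variables. If $1 \notin T$ then all $r$ columns involved consist of variables, so $\det(\cm_{\any,T})$ is homogeneous of degree $r$ with $r!$ monomials; if $1 \in T$ then expanding the determinant along that constant column (which has a single nonzero entry, in the first row) leaves an $(r-1)\times(r-1)$ determinant in the variables, homogeneous of degree $r-1$ with $(r-1)!$ monomials. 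The cardinality claim is immediate: the index set runs over all $r$-subsets $J$ of $\{k+2..n\}$, and $\#\{k+2..n\} = n-k-1$, giving $\binom{n-k-1}{r}$ polynomials. The main obstacle is purely bookkeeping---tracking the sign contributed by reordering columns into increasing order and by the $(-1)$'s hidden in $-\trsp{\mat R}$---but since the statement packages all of this into the single unspecified symbol $\Sign{T_2}{J}$, no delicate computation is actually needed to prove what is claimed.
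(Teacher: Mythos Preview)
Your proposal is correct. The approach differs from the paper's in one notable way: the paper writes $P_J = \det\bigl(\Cm\bigl(\begin{smallmatrix}-\Rm\\\Im_{n-k-1}\end{smallmatrix}\bigr)_{\any,J}\bigr)$ as a product of an $r\times n$ matrix by an $n\times r$ matrix and applies the Cauchy--Binet formula directly, obtaining $P_J = \sum_{\#T=r}\det(\Cm_{\any,T})\det\bigl(\bigl(\begin{smallmatrix}-\Rm\\\Im_{n-k-1}\end{smallmatrix}\bigr)_{T,J}\bigr)$; the second factor is then evaluated by a Laplace expansion along the rows indexed by $T_2 = T\cap\{k+2..n\}$, which immediately gives zero unless $T_2\subset J$ and otherwise yields $(-1)^{\Sign{T_2}{J}}\det(\Rm_{T_1,J\setminus T_2})$ (the paper even records an explicit expression for $\Sign{T_2}{J}$). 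Your route---full multilinear expansion column by column, then killing repeated columns and regrouping---is essentially a hands-on derivation of Cauchy--Binet specialised to this situation. It is more elementary in that it invokes nothing beyond multilinearity and alternation, but it is also longer and leaves the sign as a black box, whereas the paper's two-line application of Cauchy--Binet plus Laplace cleanly separates the $\Cm$-part from the $\Rm$-part and makes the sign explicit. Either argument proves the proposition as stated; the degree-count paragraph at the end is identical in both.
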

\begin{proof}
The matrix $\Cm_2-\Cm_1\Rm$ has size $r\times (n-k-1)$, hence there
are $\binom{n-k-1}r$ different minors $P_J = \det(\Cm
(\begin{smallmatrix}
    -\Rm\\\Im_{n-k-1}
  \end{smallmatrix})_{\any,J})$. To compute them, we use the
Cauchy-Binet formula for the determinant of a product of non-square matrices:
\begin{equation*}
  \det(\Am\Bm) = \sum_{T\subset\{1..p\}, \#T = r}\det(\Am_{\any,T})\det(\Bm_{T,\any})
\end{equation*}
where $\Am\in\mathbb K^{r\times p}$, $\Bm\in\mathbb K^{p\times r}$,
and $p\ge r$. We apply this formula to $P_J$, and use the
fact that, for $T=T_1\cup T_2$ with $T_1\subset\{1..k+1\}$ and $T_2\subset\{k+2..n\}$,
\begin{eqnarray*}
  \det\left(
  \begin{pmatrix}
    -\Rm\\\Im_{n-k-1}
  \end{pmatrix}_{T_1\cup T_2,J}  \right) &=& 0 \text{ if } T_2 \not\subset J\\
  &=&(-1)^{\Sign{T_2}{J}} \det(\mat R_{T_1, J\backslash T_2}) \text{ if } T_2 \subset J,
\end{eqnarray*}
using the Laplace expansion of this determinant along the last rows, with $\Sign{T_2}{J} = d(k+r)+(d-1)d/2+\sum_{t\in T_2} Pos(t,J)$ where $Pos(t,J)$ is the position of $t$ in $J$, and $d = \#J-\#T_2$.
\qed
\end{proof}
Each polynomial $P_J$ can be expanded into $m$ equations over $\ff q$,
the polynomial $P_J[i]$ being the coefficient of $P_J$ in
$\alpha^{i-1}$. When computing a grevlex Gr\"obner basis of the system
of the $P_J[i]$'s over $\ff q$, with an algorithm like F4 using linear
algebra, the first step consists in computing a basis of the
$P_J[i]$'s over $\Fq$.

It appears that there may be a fall of degree in this first step, in
degree $r$, that produces equations of degree $r-1$. The following
heuristic explains when this fall of degree occurs.
\begin{heuristic}\label{prop:maxminors_structure}
\begin{itemize}
\item{\em Overdetermined case:}
  when $m\binom{n-k-1}{r} \ge \binom{n}r-1$, generically, a 
  degree-$r$ Gr\"obner basis of the projected system
  $\MaxMinors(\cm_2-\cm_1\mat R)=0$ of $m\binom{n-k-1}r$ equations over
  $\ff q$ contains $\binom{n-1}{r-1} - 1$ equations of degree $r-1$, that
  are obtained by linear combinations of the initial equations.
\item{\em Intermediate case:}
  when $\binom{n}r -1 > m\binom{n-k-1}{r} > \binom{n-1}r$, generically a
  degree-$r$ Gr\"obner basis of the projected system
  $MaxMinors(\cm_2-\cm_1\mat R)=0$ contains
  $m\binom{n-k-1}{r} - \binom{n-1}r$ equations of degree $r-1$, that
  are obtained by linear combinations of the initial equations.
\item{\em Underdetermined case:}
  When $m\binom{n-k-1}r \le \binom{n-1}r$, then generically a
  degree-$r$ Gr\"obner basis of the system contains $m\binom{n-k-1}r$ polynomials that are all
  of degree $r$.
  \end{itemize}
\end{heuristic}
\begin{remark}
Here overdetermined/underdetermined refers to the system of maximal minors 
given by the set of equations $\MaxMinors(\cm_2-\cm_1\mat R)=0$
\end{remark}
\begin{remark}
  The degree-$r$ Gr\"obner bases also contain polynomials of degree
  $r$ in the overdetermined and intermediate cases, but we will not
  compute them, as experimentally they bring no speed-up to the
  computation, see Section~\ref{sec:experiments}.
\end{remark}
\begin{proposition}
  Computing the polynomials in a degree-$r$ Gr\"obner basis of the projected
  equations $\MaxMinors$ amounts to solving a linear system with
  $\nu=m\binom{n-k-1}{r}$ equations in $\mu=\binom{n}{r}$ variables, which
  costs \(O(\min(\mu,\nu)^{\omega-2}\mu\nu)\) operations in the base field, where
  $\omega$ is the exponent of matrix multiplication (see
  \cref{sec:results:complexity}).
\end{proposition}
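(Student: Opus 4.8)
The plan is to reduce the degree-$r$ Gr\"obner basis computation to a single Gaussian elimination over $\Fq$ on an explicit matrix, and then to bound its cost by the standard complexity of row-reducing a rectangular matrix.

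First I would invoke \cref{prop:maxminorsshape}: it gives the $\binom{n-k-1}{r}$ polynomials $P_J$ of $\MaxMinors(\cm_2-\cm_1\mat R)$ and shows that each $P_J$ is an $\Fqm$-linear combination of the determinants $\det(\Cm_{\any,T})$, where $T$ runs over the size-$r$ subsets of $\{1..n\}$; since $\Cm=(\cm_1\;\cm_2)$ has $n$ columns there are exactly $\mu=\binom{n}{r}$ of these determinants, some of degree $r$ (when $1\notin T$) and some of degree $r-1$ (when $1\in T$). Expanding the $\Fqm$-coefficients of $P_J$ in the basis $(1,\alpha,\dots,\alpha^{m-1})$, each projected equation $P_J[i]$ for $1\le i\le m$ is still an $\Fq$-linear combination of the same $\det(\Cm_{\any,T})$'s. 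So the $\nu=m\binom{n-k-1}{r}$ polynomials $P_J[i]$ all lie in the $\Fq$-span of these $\mu$ determinants, and the coordinates of the $P_J[i]$ in that spanning family form a matrix $\mathcal M\in\Fq^{\nu\times\mu}$ whose entry in row $(J,i)$ and column $T$ is $[\alpha^{i-1}]\bigl((-1)^{\Sign{T_2}{J}}\det(\mat R_{T_1,J\backslash T_2})\bigr)$ — this being $0$ unless $T\cap\{k+2..n\}\subseteq J$, with $T_1=T\cap\{1..k+1\}$ and $T_2=T\cap\{k+2..n\}$.

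Next I would argue that computing the degree-$r$ Gr\"obner basis polynomials described in \cref{prop:maxminors_structure} amounts exactly to row-reducing $\mathcal M$. Generically every $P_J[i]$ has degree exactly $r$, so in degree $\le r$ the only multiples of the $P_J[i]$'s are the $P_J[i]$'s themselves; viewing the $\det(\Cm_{\any,T})$ as the (degree-$r$ or degree-$(r-1)$) "monomials", the degree-$r$ Macaulay matrix of $\{P_J[i]\}$ is precisely $\mathcal M$, and a row echelon form of $\mathcal M$ — obtained by eliminating the columns with $1\notin T$ first, as in the informal discussion above — gives a basis of the $\Fq$-span of the $P_J[i]$: the rows supported only on the columns $T$ with $1\in T$ are the degree-$(r-1)$ equations produced by the degree fall, the others are the degree-$r$ polynomials of the basis. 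Hence the task reduces to solving the linear system with the $\nu$ equations $P_J[i]=0$ and the $\mu$ unknowns $z_T$ (each $z_T$ to be substituted back by $\det(\Cm_{\any,T})$). For the cost, I would use the standard fact that a row echelon form of a $\nu\times\mu$ matrix over a field can be computed in $\OO{\min(\mu,\nu)^{\omega-2}\mu\nu}$ field operations, by partitioning into square blocks of size $\min(\mu,\nu)$ and reducing to square matrix multiplication and inversion (the bound recalled in \cref{sec:results:complexity}).

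The only step that is not routine bookkeeping is the reduction in the previous paragraph — namely that one Gaussian elimination on $\mathcal M$, rather than an iterated F4 loop that would keep multiplying the newly found degree-$(r-1)$ residues by linear forms, already yields all the relevant polynomials of a degree-$r$ Gr\"obner basis. This is exactly the generic behaviour asserted by \cref{prop:maxminors_structure}, which I would simply take as given here; the counting of the $P_J$, the $\alpha$-expansion, and the complexity of rectangular row reduction are all routine.
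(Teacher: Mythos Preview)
Your proposal is correct and follows essentially the same route as the paper: both view the projected $\MaxMinors$ equations as $\Fq$-linear combinations of the $\binom{n}{r}$ determinants $\det(\Cm_{\any,T})$, build the corresponding $\nu\times\mu$ coefficient matrix over $\Fq$ (with the columns for $1\in T$ pushed to the right), and observe that a single row echelon form of this matrix yields the desired degree-$r$ and degree-$(r-1)$ polynomials at the stated Gaussian-elimination cost. Your write-up is in fact a bit more explicit than the paper's in justifying why one row reduction suffices (by appealing to the generic behaviour in \cref{prop:maxminors_structure}), but the argument is the same.
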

\begin{proof}
  It is possible to view the system $\MaxMinors(\cm_2-\cm_1\mat R)$
  projected over $\ff q$ as a linear system of $\mu=m\binom{n-k-1}r$
  equations, whose variables are the $\nu=\binom{n}r$ unknowns
  $x_T = \det(\cm_{\any,T})$ for all $T\subset \{1..n\}$ of size
  $r$. The matrix associated to this linear system is a matrix $\mat M$ of size
  $\mu \times \nu$ whose coefficient in row
  $(i,J) : i \in \{1..m\}, J \subset \{k+2..n\}, \#J = r$, and column
  $x_T$ is, with $T_2 = T \cap \{k+2..n\}$:
  \begin{eqnarray}
\mat M[(i,J),x_T] &=&
                    \begin{cases}
                      [\alpha^{i-1}](-1)^{\Sign{T_2}{J}} \det(\mat R_{T\cap\{1..k+1\}, J\backslash T_2})&\text{ if }  T_2  \subset J,\\
                      0 & \text{ otherwise.}
\end{cases}\label{eq:MaxMinorsMatrix}
  \end{eqnarray}
  where $[\alpha^{i-1}]\beta$ is the $i$st component of $\beta\in\Fqm$ viewed in the vector space $\Fq^m$ with generator basis $
  \begin{pmatrix}
    1 & \alpha & \dots & \alpha^{m-1}
  \end{pmatrix}$.

  A basis of the vector space generated by the equations 
  $\MaxMinors(\cm_2-\cm_1\mat R)=0$ is given by
  $\tilde{\mat M} \cdot \mat T$ where $\tilde{\mat M}$ is the row
  echelon form of $\mat M$ and $\mat T$ is the column vector formed by
  the polynomials $\det(\cm_{\any,T}) : \#T=r$. As we are searching
  for equations of degree $r-1$, we order the variables $x_T$ such
  that the ones with $1\in T$ that correspond to polynomials
  $\det(\cm_{\any,T})$ of degree $r-1$ are the rightmost entries
  of the matrix.
  \qed
\end{proof}

\cref{prop:maxminors_structure} can be stated in terms of the matrix
$\mat M$.  In the overdetermined case, that is when
$m\binom{n-k-1}{r} \ge \binom{n}r-1$, we expect matrix $\mat M$ to have
rank $\binom{n}r-1$ with high probability.   This rank can not be larger, as the
(left) kernel space of the matrix has dimension 1 (this comes from the
fact that the equations are homogeneous). Hence,
$\tilde{\mat M}\cdot \mat T$ produces $\binom{n-1}r$ equations of
degree $r$, and $\binom{n-1}{r-1}-1$ equations of degree $r-1$, that
have all the shape $\det(\cm_{\any,T})$ or
$\det(\cm_{\any,T}) - \det(\cm_{\any,T_0})$ where $T_0$ corresponds to
the free variable $x_{T_0}$ of the linear system,
$1\in T_0$. In the intermediate and underdetermined cases, we also expect matrix $\mat M$ to be full rank in general, and to be
also full rank on the columns corresponding to the $\cv_T$'s of degree $r$.

\section{Experimental results, complexity bounds, and security}
\label{sec:results}

\subsection{Experimental results}
\label{sec:experiments}

We did various computations for different values of the parameters
$(m,n,k,r)$.  We got our best complexity results by doing the
following steps:

\begin{enumerate}
\item compute the set of equations $\mathcal F$ which comes from  $\begin{pmatrix}
    1 & \alpha & \cdots & \alpha^{m-1}
  \end{pmatrix}
  \sm \left( \mat{C}_2 - \mat{C}_1\mat{R}\right)$ specialised as
  in~\eqref{eqn:oj-modelling-specialized},
\item compute the system $\MaxMinors(\cm_2-\cm_1\mat R)$,
\item compute the matrix $\mat M$ from~\eqref{eq:MaxMinorsMatrix} and
  its echelon form $\tilde{\mat M}$, let $\mathcal J$ be the set of
  the resulting equations of degree $r-1$ in the $\cm$ variables,
\item if $\mathcal J$ is empty, then let $\mathcal J$ be the set of
  equations coming from~$\tilde{\mat M}$ of degree $r$ in the $\cm$
  variables,
\item compute $\mat G$ a reduced degree-$d$ Gr\"obner basis of the
  system $\{\mathcal {F,J,F}_q\}$, where
  \begin{equation*}
 d =
  \begin{cases}
    r & \text{ in the overdetermined case},\\
    r \text{ or } r+1 & \text{ in the intermediate case},\\
    r+2& \text{ in the underdetermined case}.
  \end{cases}
\end{equation*}
\end{enumerate}
The computations are done using \texttt{magma v2.22-2} on a machine with an
Intel\textsuperscript{\textregistered} Xeon\textsuperscript{\textregistered}
2.00GHz processor. Here are the notation used in all
tables:
\begin{itemize}
  \item[$\bullet$] $n_S=(r-1)(m-r)$: the number of variables in $\sm$
  \item[$\bullet$] $n_C=r(n-1)$: the number of variables in $\cm$
  \item[$\bullet$] $n_{eq}=m(n-k-1)$: the number of equations in $\mathcal F$
  \item[$\bullet$] $d:n_{syz}$: the number of equations in $\mathcal J$, where $d$
                    denotes the degree of the equations and $n_{syz}$ the number of
                    them:
  \begin{itemize}
    \item[$\bullet$] $r-1:\binom{n-1}{r-1}-1$ in the overdetermined case
    \item[$\bullet$] $r-1:m\binom{n-k-1}r - \binom{n-1}r$ in the intermediate case
    \item[$\bullet$] $r:m\binom{n-k-1}r$ in the underdetermined case
  \end{itemize}
  \item[$\bullet$]$T_{syz.}$: time of computing the $n_{syz}$ equations of degree
  $r-1$ or $r$ in $\mathcal J$
  \item[$\bullet$]$T_{Gb syz}$: time of the Gr\"obner basis computation of $\{\mathcal J,\mathcal F_q\}$
  \item[$\bullet$]$T_{Gb}$: time of the Gr\"obner basis computation of $\{\mathcal F, \mathcal J, \mathcal F_q\}$
  \item[$\bullet$]$d_{ff}$:  the degree where we observe the first fall of degree
  \item[$\bullet$]$d_{max}$: the maximal degree where some new polynomial is
                    produced by the F4 algorithm
  \item[$\bullet$] ``Max Matrix size'': the size of the largest matrix reduced
                    during the F4 computation, given by \texttt{magma}. We did not take
                    into account the useless steps (the matrices giving no new
                    polynomials) 
\end{itemize}

Table~\ref{tab:LP} page~\pageref{tab:LP} gives our timings on the
parameters proposed in~\cite{LP06}.  For each set of parameters, the
first row of the table gives the timing for the direct computation of
a Gr\"obner basis of $\{ \mathcal F,\mathcal F_q\}$ whereas
the second row gives the timings for the Gr\"obner basis of
$\{ \mathcal {F,J,F}_q\}$. We can see that, apart from very
small parameters, the computation of the equations
$\MaxMinors(\cm_2 - \cm_1 \mat R)$ is negligible compared to the time
of the Gr\"obner basis computation.

Among the proposed parameters, only the $(15,15,8,3)$ was in the case
where the system $\MaxMinors$ is underdetermined. In that case, the
most consuming part of the computation is the Gr\"obner basis of the
system $\MaxMinors$, that depends only on the $\cm$ variables. Once
this computation is done, the remaining Gr\"obner basis of
$\{\mathcal {F,J,F}_q\}$ has a negligible cost.

Table~\ref{tab:n18} page~\pageref{tab:n18} gives timing for different
values of $k$ and $r$, with $m=14$ and $n=18$ fixed. For $r=2$, the
values $k\in\{1..11\}$ correspond to the overdetermined case, the
value $k=12$ to the intermediate one, and $k=13$ to the
underdetermined case. The values $k\in\{1..11\}$ behave all like
$k=11$. As for the parameters from~\cite{LP06}, the hardest cases are
the ones when the system $\MaxMinors$ is underdetermined, where the
maximal degree reached during the computation is $r+2$.  For the
overdetermined cases, the maximal degree is $r$, and for the
intermediate cases, it may be $r$ or $r+1$.

For $r=3$, the overdetermined cases are $k\in\{1..8\}$, $k=9$ is
intermediate and $k\in\{10..11\}$ are underdetermined. Values of
$k\ge 12$ do not allow a unique decoding for $r=3$, the
Gilbert-Varshamov bound being 2 for those values.

For $r=4$ the tradeoffs are $1\le k \le 6$, $k=7$ and $8\le k \le 9$
for the three cases, and for $r=5$, $1\le k \le 5$, $k=6$ and
$7\le k \le 8$. We could not perform the computations for the
intermediate and underdetermined cases, due to a lack of memory.
We also observe that the first fall of degree ($d_{ff}$) does not always
predict the complexity of the computation.

Table~\ref{tab:r3} page~\pageref{tab:r3} gives the timings for a fixed
$r=3$, a ratio $n = 2k$ and various values of $k$. Again, we can
observe that for defavorable cases ($k=6, 7$) the maximal degree is
$r+2$ or $r+1$ rather than $r$, making the computation harder for
small values of $k$ than for larger. 

\begin{table}
  \caption{We compare the behavior of the Gr\"obner basis computation
    for the parameters considered in \cite{LP06a}, with and without
    adding to the system the equations $\mathcal J$.
  }
  \centering \scalebox{0.95}{
    \begin{tabular}{|*{7}{c|}*{6}{r|}|*{6}{c|}|}
      \hline
      $m$ & $n$ & $k$ & $r$ & $n_S$ & $n_C$ & $n_{eq}$& $n_{syz}$ &$T_{syz}$& $T_{Gb syz}$ & $T_{Gb}$ & $d_{ff}$ & $d_{max}$& Max Mat Size  \\
      \hline
      25 & 30 & 15 & 2 & 23 & 58 & 350 &\multicolumn{2}{c|}{} &&0.4 s & 3&3&18550 $\times$19338    \\
\multicolumn{7}{|c|}{}  & 1:28&0.4 s& &0.02 s& 2&2& 1075 $\times$ 749   \\\hline
      30 & 30 & 16 & 2 & 28 & 58 & 390& \multicolumn{2}{c|}{}  &&0.5 s &3 &3& 22620 $\times$ 25288  \\
\multicolumn{7}{|c|}{} &1:28 & 0.4 s& &0.02 s & 2 & 2 &  1260 $\times$ 899   \\\hline
      30 & 50 & 20 & 2 & 28 & 98 & 870 &\multicolumn{2}{c|}{} &&2.2 s & 3 &3 &67860 $\times$ 57898  \\
\multicolumn{7}{|c|}{} &1:48&  3.8 s& & 0.07 s & 2 & 2 & 2324 $\times$ 1499   \\\hline
       50 & 50 & 26 & 2 & 48 & 98 & 1150 & \multicolumn{2}{c|}{}& & 7.4 s & 3 &3 &112700 $\times$ 120148 \\
 \multicolumn{7}{|c|}{} & 1:48&  3.5 s& & 0.2 s & 2 & 2 &  3589$\times$2499   \\\hline
      15 & 15 & 7  & 3 & 24 & 42 & 105 &  \multicolumn{2}{c|}{} & &60.1 s & 4 & 4&77439 $\times$ 153532  \\
 \multicolumn{7}{|c|}{} & 2:90&0.2 s& &  0.06 s & 3 & 3 & 8860 $\times$ 13658  \\\hline
      15 & 15 & 8  & 3 & 24 & 42 & 90 &   \multicolumn{2}{c|}{} & -- & & 4 & $\ge $5 & -- \\
 \multicolumn{7}{|c|}{} &  \underline{3:300} & 0.3 s& 162 s & 0.2 s & 4 & 5 & 191515 $\times $ 457141\\\hline
      20 & 20 & 10 & 3 & 34 & 57 & 180 &   \multicolumn{2}{c|}{} && 450 s & 4 & 4 & 233672 $\times$ 543755  \\
 \multicolumn{7}{|c|}{} &  2:170&1.0 s& & 0.2 s & 3 & 3 & 22124 $\times$ 35087 \\
      \hline
    \end{tabular}
  }\label{tab:LP}
\end{table}

\begin{table}
  \caption{$m=14$ and $n=18$. 
  }
  \centering
  \scalebox{0.95}{
    \begin{tabular}{|*{11}{r|}rcl|r|}
  \hline
  $k$&$r$&$n_{syz}$&$n_S$ & $n_C$ &$n_{eq}$& $T_{Syz.}$ & $T_{Gb syz}$ & $T_{Gb}$ & $d_{ff}$ &$d_{max}$& \multicolumn{3}{c|}{Max Matrix size} &Mem.\\
      \hline
      \hline
11 &2& 1:16 & 12 & 34 & 84 & $<0.1$s && $<0.1$s &2&2& 322&\(\times\) &251& 34 Mo\\
      \hline
12 &2& 1:4 & 12 & 34 & 70 & $<0.1$s && $<0.1$s &3&3& 1820&\(\times\) &2496& 34 Mo\\
      \hline
13 &2& 2:84 & 12 & 34 & 56 & $<0.1$s &32 s & 0 s & 3 & 4 & 231187&\(\times\) &141064& 621 Mo\\
      \hline
      \hline
      8 & 3 & 2:135 & 22 & 51 & 126 & 0.6 s && 0.1 s & 3 &3 & 13179 &\(\times\)& 18604 &34 Mo\\
      \hline
      9 & 3 & 2:104 & 22 & 51 & 112 & 0.5 s && 0.7 s & 3 &3 & 10907~&\(\times\)&~18743   &67 Mo\\
      \hline\hline
      4 & 4 & 3:679 & 30 & 68 & 182 & 12.1 s && 53.7 s &2 &4&314350 &\(\times\)& 650610 &1.3 Go\\
      5 & 4 & 3:679 & 30 & 68 & 168 & 9.4 s && 59.3 s &4 &4&314350 &\(\times\)& 650610  &2.0 Go\\
      6 & 4 & 3:679 & 30 & 68 & 154 & 7.1  s && 69.4 s &4 &4 & 281911 &\(\times\)& 679173 &3.6 Go\\\hline
      \hline
     2 & 5 & 4:2379 & 36 & 85 & 210 & 138.8  s && 27.5 s & 2& 4& 416433 &\(\times\)& 669713& 1.1 Go\\ 
      5 & 5 & 4:2379 & 36 & 85 & 196 &  44.8 s && 5h08 & 2 &  5 &  7642564   &\(\times\)& 30467163 & 253.6 Go\\
\hline
\end{tabular}
}\label{tab:n18}
\end{table}

\begin{table}
  \caption{The parameters are  $r=3$,
    $m=n$, $k=\frac n2$. 
  }
  \centering
    \begin{tabular}{|*{10}{r|}}
  \hline
  $k$&$n_{syz}$ &$n_S$ & $n_C$ &$n_{eq}$& $T_{syz}$& $T_{Gb syz}$ &$T_{Gb}$ & $d_{max}$ & Memory\\\hline
      6 & 3:120 & 18 & 33 & 60 & 0.2s & 117 s & 0.02s & 5 & 4.9 Go\\
      7 & 3:280 & 22 & 39 & 84 & 0.1s & 9.7 s & 0.1s & 4&0.3 Go\\
  \hline
  8 & 2:104 & 26 & 45 & 112 & 0.2s & &0.1s & 3&.04 Go\\
  17 & 2:527 & 62 & 99 & 544 & 34.3s & &4.7s &3& 0.3 Go\\
  27 & 2:1377 & 102 & 159 & 1404 & 650.2s && 161.3s&3& 2.7 Go\\
  37 & 2:2627 & 142 & 219 & 2664 & 5603.6s & &3709.4s &3& 15.0 Go\\
  47 & 2:4277 & 182 & 279 & 4324 & 26503.9s & &26022.6s & 3& 83.0 Go\\
\hline
\end{tabular}\label{tab:r3}
\end{table}

\subsection{Complexity analysis and security over \(\ff{2}\)}
\label{sec:results:complexity}

Now, we give an upper bound on the complexity of our algebraic approach to
solve the $(m,n,k,r)$-decoding problem using the modelling of
\cref{sec:modelling:oj}. The complexity is estimated in terms of the number of
operations in \(\ff{2}\) that the algorithm uses. This allows us to update the
number of bits of security for several cryptosystems, as showed in
\cref{tbl:security}: Loidreau's one \cite{L17}, ROLLO \acite{ROLLO}, and RQC
\acite{RQC2}.
Note that the restriction to \(\ff{2}\) is only there because we want to derive security values. If one works over a larger field \(\Fq\), a similar analysis can be derived. The only change in this case is to consider the relevant number of monomials. Note also that even if \cref{alg:decoding} works over any field, its success probability given in \cref{proposition_SCSS} depends on $q$.
\begin{table}
  \centering
	\begin{tabular}{|c|c|c|c|c|}
	\hline \textbf{Cryptosystem} & \textbf{Parameters $(m,n,k,r)$}
						 					& \ \ \ \ $d=r$ \ \ \ \			&$d=r+1$			& \textbf{Previous} 	 	\\\hline
    Loidreau 		&	$(128,120,80,4)$ 	& \textbf{96.3}					&117.1				&	256						\\\hline\hline\hline
    ROLLO-I-128 	&	$(79,94,47,5)$		& \textbf{114.9} 				&134.5				& 	128						\\\hline
	ROLLO-I-192 	&	$(89,106,53,6)$		& \textbf{142.2}				&162.5				&	192						\\\hline
	ROLLO-I-256 	&	$(113,134,67,7)$	& 174.0							&\textbf{195.3}		&	256 					\\\hline\hline\hline
	ROLLO-II-128	&	$(83,298,149,5)$	& \textbf{132.3} 				&155.4				& 	128						\\\hline
	ROLLO-II-192 	&	$(107,302,151,6)$	& \textbf{161.5}				&185.0				& 	192 					\\\hline
	ROLLO-II-256 	&	$(127,314,157,7)$	& 191.6							&\textbf{215.4}		& 	256 					\\\hline\hline\hline
	ROLLO-III-128	&	$(101,94,47,5)$		& \textbf{117.1} 				&137.2				& 	128 					\\\hline
  	ROLLO-III-192	&	$(107,118,59,6)$	& \textbf{145.7}				&166.6				& 	192 					\\\hline
	ROLLO-III-256	&	$(131,134,67,7)$	& 175.9							&\textbf{197.5}		& 	256 					\\\hline\hline\hline
    RQC-I		  	&	$(97,134,67,5)$		& \textbf{121.1}				&142.0				& 	128 					\\\hline
    RQC-II 			&	$(107,202,101,6)$	& \textbf{154.2}				&176.5				& 	192 					\\\hline
    RQC-III 		&	$(137,262,131,7)$	& 188.4							&\textbf{211.9}		& 	256 					\\\hline
	\end{tabular}
  \caption{Security in bits for several cryptosystems with respect to our
    attack, computed using \cref{eqn:complexity} with $\omega=2.807$, $d=r$ or
    $d=r+1$. The values in bold correspond the most likely maximal degree,
    i.e.~$r$ if \cref{eq:highly-overdetermined} holds and $r+1$ otherwise. The
    last column gives the previous best known security values, based
  on the attack in~\cite{AGHT18}.}
  \label{tbl:security}
\end{table}

Remark that, in \cref{tbl:security}, for the sets of parameters which do not
satisfy \cref{eq:highly-overdetermined}, which correspond to underdetermined
instances, we assume that the system can be solved at $d=r+1$. It is a
conservative choice: in the experiments of Section \ref{sec:experiments}, the
maximal degree is often $r$ for the underdetermined cases.  

The complexity bound follows from the fact that the Gr\"obner basis algorithm
works with Macaulay matrices of degree \(\delta\) for increasing values of
\(\delta\) up to \(d\), the degree for which the Gr\"obner basis is found (see
\cref{sec:syzygies} for a more detailed description). At each of these steps,
the algorithm performs a Gaussian elimination algorithm on a Macaulay matrix
which has at most \(\binom{(m-r)(r-1)+(n-1)r}{\delta}\) columns and fewer rows
than columns. The number of columns is the number of squarefree monomials of
degree \(\delta\) in \((m-r)(r-1)+(n-1)r\) variables.

In general, Gaussian elimination of a $\mu \times \nu$ matrix of rank \(\rho\)
over a field has a complexity of $\OOtsp{\rho^{\omega-2} \mu \nu} \subseteq
\OOtsp{\max(\mu,\nu)^\omega}$ operations in that field \cite{BH74,S00b}. Here,
\(\omega\) is the exponent of matrix multiplication, with naive bounds $2 \le
\omega \le 3$. The best currently known value for $\omega$ is \(\omega \approx
2.37\) \cite{L14b}, by an improvement of Coppersmith-Winograd's algorithm. In
terms of practical performances, the best known method is based on Strassen's
algorithm, which allows one to take $\omega \approx 2.807$, and when the base
field is a finite field, this exponent is indeed observed in practice for
matrices with more than a few hundreds rows and columns.

The Macaulay matrices encountered in the Gr\"obner basis computations we
consider are usually very sparse and exhibit some structure. Some Gaussian
elimination algorithms have been designed specifically for matrices over
\(\ff{2}\) \cite{AB14}, also for sparse matrices \cite{BD16}, and even to take
advantage of the specific structure of Macaulay matrices (see \cite{BEFLM16};
we expect \texttt{Magma}'s closed-source implementation of F${}_4$ to use
similar techniques). However, none of these optimized algorithms has been
proven to reach a complexity which is asymptotically better than the one
mentioned above, apart from speed-ups by constant factors.

As a result, we bound the complexity of the step of degree \(\delta\) in the
Gr\"obner basis computation by that of performing Gaussian elimination on a
$\mu \times \nu$ matrix over \(\ff{2}\), with \(\mu \le \nu =
\binom{(m-r)(r-1)+(n-1)r}{\delta}\); the overall computation then costs
\begin{equation}
  \label{eqn:complexity}
  \OO{\left( \sum_{\delta=0}^{d} \ \binom{(m-r)(r-1)+(n-1)r}{\delta} \right)^\omega \ }
\end{equation} 
operations in \(\ff{2}\). Let us now focus on the case $m=n=2k$ and $r
\approx\sqrt{n}$. Then the complexity of our approach is as in
\cref{eqn:complexity} with \(d=r\). Using a similar analysis, the approach
based on Kipnis-Shamir's modelling has a complexity of
\[
  \OO{\left( \sum_{\delta=0}^{r+2} \binom{km+r(n-r)}{\delta} \right)^\omega \ }
\]
operations. Asymptotically, the dominant term in the former bound is of the
order of $2^{\frac{3}{2} \omega r\log_2(n)}$, to be compared to $2^{2\omega
r\log_2(n)}$ in the Kipnis-Shamir bound. Also, the aforementioned 
combinatorial attacks (\cite{AGHT18}) would have a complexity of the order of 
$2^{\frac{1}{2}rn}$ when $m=n=2k$. 

Finally, note that the complexity bound stated above was derived under
assumptions: in \cref{sec:modelling:oj}, we presented the modelling along with
some assumptions which allowed us to specialize variables a priori and still
ensure that the algorithm of \cref{sec:jac} yields the solution
\(\lambda\err\). In general, the assumption might not hold, that is, the
specific specialization made in \cref{sec:modelling:oj} could be wrong.  We use
\Cref{alg:decoding} in order to specialize more variables: it first uses the
specialization detailed in \cref{sec:modelling:oj}, and if that one fails,
follows on with other similar specializations. This algorithm uses the
subroutine $\mathtt{Solve}(\mat{S},\mat{C},\mat{R})$, which augments the system
as explained in \cref{sec:jac} and returns a solution to \cref{eq:oj-modelling}
if one is found and $\emptyset$ otherwise.

\begin{algorithm}[htbp]
  \DontPrintSemicolon
  \KwData{Matrix $\mat{R}$}
  \KwResult{A solution to the system in \cref{eq:oj-modelling} or $\emptyset$}
  \(\mat{S} = m \times r\) matrix of variables \;
  \(\mat{C} = r \times n\) matrix of variables \;
  Set the first column and the first row of $\mat{S}$ to $[1 \;\; 0 \;\; \cdots \;\; 0]$ \;
  Set a randomly selected column of $\mat{C}$ to $\trsp{[1 \;\; 0 \;\; \cdots \;\; 0]}$ \;
  Choose at random $\lfloor \frac{m-1}{r-1} \rfloor$ disjoint subsets $T_i \subseteq \{2,\hdots,m\}$ of cardinality $r-1$ \;
  \For{$i\gets1$ \KwTo $\lfloor \frac{m-1}{r-1} \rfloor$}{
    Set the $(r-1)\times(r-1)$ submatrix $\mat{S}_{T_i,\{2,\hdots,r\}}$ to $\mat{I}_{r-1}$ \;
    $\mathrm{sol} = \mathtt{Solve}(\mat{S},\mat{C},\mat{R})$       \;
    \lIf{$\mathrm{sol} \neq \emptyset$}{ \Return $\textrm{sol}$ }
  }
  \Return $\emptyset$ \;
  \caption{$(m,n,k,r)$-Decoding}
  \label{alg:decoding}
\end{algorithm}

For positive integers $a$ and $b$ with $a\le b$, we denote by
$p_{q,a,b}:=\prod_{i=0}^{a-1} \left(1-q^{i-b}\right)$ the probability that a
uniformly random matrix in $\matRing{\ff{q}}{a}{b}$ has full rank. 

\begin{proposition}\label{proposition_SCSS}
  Fix integers $m,n,k,r$, and let $c \in \{1,\ldots,\lfloor \frac{m-1}{r-1}
  \rfloor\}$. Suppose that a $(m,n,k,r)$-rank decoding instance is chosen uniformly at
  random, and that the input matrix \(\mat{R}\) is built from this instance. Then,
  the probability that \cref{alg:decoding} makes at most $c$ calls to
  \(\mathtt{Solve}(\mat{S},\mat{C},\mat{R})\) before finding a solution is
  greater than
  \[   \frac{1-q^{-r}}{1-q^{-n}} \left(1-\frac{\left(1-p_{q,r-1,r-1}\right)^c}{p_{q,r-1,m-1}}\right). \]
\end{proposition}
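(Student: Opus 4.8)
The plan is to decompose the event ``\cref{alg:decoding} finds a solution within its first $c$ calls to $\mathtt{Solve}$'' into two independent-looking pieces and lower-bound each. First I would fix the randomness coming from the decoding instance itself: the algorithm can only succeed through the specialization scheme of \cref{sec:modelling:oj} if the chosen column of $\cm$ that we set to $\trsp{[1\;0\;\cdots\;0]}$ corresponds to a nonzero coordinate of the (unique) error $\err$, and, more importantly, if $\err$ really does have a nonzero first coordinate relative to the basis we are implicitly using — equivalently, the normalization $\lambda\err$ has first coordinate $1$ and is well defined. The factor $\frac{1-q^{-r}}{1-q^{-n}}$ should be exactly the probability that a uniformly random support vector of rank $r$ in $\Fqm^n$ has its relevant coordinate nonzero; I would compute this as follows. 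The support of $\err$ is a random $r$-dimensional $\Fq$-subspace of $\Fqm$ and the coordinates of $\err$ form a random rank-$r$ matrix in $\matRing{\Fq}{\bullet}{n}$ up to choice of basis; the probability that a fixed coordinate of such a matrix (a column) is nonzero, conditioned on the whole matrix having rank $r$, works out to $\frac{q^n-q^{\,?}}{q^n-1}$-type expressions that simplify to $\frac{1-q^{-r}}{1-q^{-n}}$. This is the first, instance-dependent factor and it does not depend on $c$.

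Conditioning on that good event, the remaining randomness is internal to the algorithm: at iteration $i$ it picks a fresh random $(r-1)\times(r-1)$ submatrix location $\mat S_{T_i,\{2..r\}}$ and sets it to the identity, and $\mathtt{Solve}$ succeeds on that iteration precisely when the corresponding $(r-1)\times(r-1)$ block of the true matrix $\sm$ (the one expressing the support basis) is invertible. Here I would argue that, since the $T_i$ are chosen disjoint and uniformly among size-$(r-1)$ subsets of $\{2..m\}$, the $\lfloor\frac{m-1}{r-1}\rfloor$ blocks $\sm_{T_i,\{2..r\}}$ behave, for the purpose of the invertibility events, like independent uniformly random $(r-1)\times(r-1)$ matrices over $\Fq$ — this is where the bound becomes a bound rather than an identity, because the blocks of a fixed (random) $\sm$ are only \emph{approximately} independent. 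The cleanest route is: $\sm$ itself is (after the legitimate normalization of its first column/row) essentially a uniformly random full-rank matrix, so the conditional distribution of its disjoint $(r-1)$-row blocks is that of a random matrix conditioned to have rank $r$; the probability that \emph{none} of the first $c$ blocks is invertible is at most $\frac{(1-p_{q,r-1,r-1})^c}{p_{q,r-1,m-1}}$, the denominator $p_{q,r-1,m-1}$ being the price paid for the conditioning on $\sm$ having full column rank (a union-bound / Bayes-type inequality: $\Pr[\text{all bad}\mid \sm \text{ full rank}] \le \Pr[\text{all bad}]/\Pr[\sm \text{ full rank}]$, and under the unconditioned product measure the blocks genuinely are independent, giving $(1-p_{q,r-1,r-1})^c$). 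Hence the conditional success probability is at least $1-\frac{(1-p_{q,r-1,r-1})^c}{p_{q,r-1,m-1}}$, and multiplying the two factors gives the claimed bound.

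The step I expect to be the main obstacle is making the ``disjoint blocks of a random full-rank matrix are independent enough'' argument rigorous with exactly the constant $p_{q,r-1,m-1}$ in the denominator. Concretely, one must be careful about what distribution $\sm$ has after the cosmetic specializations (first column and first row set to $[1\;0\cdots0]$): one has to check that conditioning the support basis so that $S_1=1$ and so that the top-left entry pattern holds does not destroy the ``uniform among full-rank'' structure on the remaining $(m-1)\times(r-1)$ part, and then that restricting to $r-1$ rows out of those $m-1$ and asking for an invertible $(r-1)\times(r-1)$ minor gives precisely the quantity $p_{q,r-1,m-1}$ as the normalizing probability. I would handle this by first proving the inequality $\Pr[\text{all } c \text{ chosen minors singular} \mid \Bm \text{ has rank } r-1] \le \Pr_{\text{unconditioned}}[\text{all } c \text{ chosen disjoint minors singular}] / \Pr[\Bm \text{ has rank } r-1]$ for $\Bm$ a uniform $(m-1)\times(r-1)$ matrix, then using independence of disjoint-row blocks of a uniform matrix to factor the numerator as $(1-p_{q,r-1,r-1})^c$ and identifying $\Pr[\Bm\text{ has rank }r-1]=p_{q,r-1,m-1}$ from the definition of $p_{q,a,b}$. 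Everything else — the instance-side factor and the final multiplication — is a short computation.
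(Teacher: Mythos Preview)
Your proposal is correct and follows essentially the same route as the paper: the proof there also splits into two lemmas giving exactly your two factors, with the first computed by counting full-rank $r\times n$ matrices with a prescribed zero column, and the second obtained via precisely the Bayes-type inequality $\Pr[\text{all }c\text{ blocks singular}\mid \widehat{\sm}\text{ full rank}]\le \Pr[\text{all }c\text{ blocks singular}]/\Pr[\widehat{\sm}\text{ full rank}]$ on a uniform $(m-1)\times(r-1)$ matrix, using independence of disjoint row-blocks for the numerator. One small framing point: the randomness driving the second factor is that of the \emph{instance} (which determines the true $\sm$), not the algorithm's choice of the $T_i$; by exchangeability any fixed disjoint choice of the $T_i$ gives the same probability, so this does not affect your argument.
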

The proof is differed to \hyperref[app:proof_SCSS]{Appendix}.

If one applies this proposition to the cryptosystems mentioned in
\cref{tbl:security}, with at most 5 calls to
\(\mathtt{Solve}(\mat{S},\mat{C},\mat{R})\), \cref{alg:decoding} will return a
solution with a probability always greater than $0.8$; note that for these
instances the quantity $\lfloor \frac{m-1}{r-1} \rfloor$ is greater than 15,
and around 20 for most of them.

In the event where \cref{alg:decoding} returns \(\emptyset\) after $\lfloor
\frac{m-1}{r-1} \rfloor$ calls to \(\mathtt{Solve}(\mat{S},\mat{C},\mat{R})\),
one can run it again until a solution is found. The probabilities mentioned in
the previous paragraph show that for parameters of interest a second run of the
algorithm is very rarely needed.

\section{Conclusion}

In this paper we introduce a new approach for solving the Rank Metric 
Decoding problem with Gr\"obner basis techniques. Our approach is based on adding partial syzygies 
to a newer version of a modelling due to Ourivski and Johansson. 

Overall our analysis shows that our attack, for which we give a general
estimation, clearly outperforms all previous attacks in rank metric for a classical (non quantum)
attacker. In particular we obtain an attack below the claimed security level for all 
rank-based schemes
proposed to the NIST Post-Quantum Cryptography Standardization Process.
Note that there has been some very recent progress \cite{BBCGPSTV20} on the modelling and the attack proposed here. This results in even less complex attacks
and in the removal of the Gr\"obner basis computation step: it is replaced by solving a linear system.
Although our attack and its recent improvement really improve on previous attacks for rank metric,
they meanwhile suffer from two limitations. 

First these attacks do not benefit from
a direct Grover quantum speed-up, unlike combinatorial attacks. 
For the NIST parameters (with the exception of Rollo-I-192 for the latest attack 
\cite{BBCGPSTV20}) the best quantum attacks still remain quantum attacks based on
combinatorial attacks, because of the Grover speed-up. Second, these attacks  need an 
important amount of memory for large parameters.

\section*{Acknowledgements}

This work has been supported by the French ANR projects CBCRYPT
(ANR-17-CE39-0007) and the MOUSTIC project with the support from the European
Regional Development Fund (ERDF) and the Regional Council of Normandie. The
authors would like to thank the anonymous reviewers for their valuable comments
and suggestions, as well as Ray Perlner and Daniel Smith for useful
discussions.

\bibliographystyle{splncs04}

\appendix
\section*{Appendix: Proof of Proposition \ref{proposition_SCSS}}
\label{app:proof_SCSS}

Let $n,m,k,r$ be positive integers such that $n$ and $m$ are both greater than
$r$. Let $E$ be a $\ff{q}$-vector space of $\ff{q^m}$ of dimension $r$
spanned by $\{E_1,E_2,\hdots,E_r\}$ and let $\ev \in \ff{q^m}^{n}$ whose
components generate $E$. By definition, there exists a non-zero coordinate
$e_j$ of $\ev$, and hereafter one focuses on the vector space $\lambda
E=\langle \lambda E_1, \lambda E_2, \hdots, \lambda E_r\rangle$ where
$\lambda=e_j^{-1}$.

Given a basis $(1,\alpha,\dots,\alpha^{m-1})$ of $\ff{q^m}$ over $\ff{q}$, one
can write a basis of $\lambda E$ as a matrix $\mat{S}\in
\matRing{\ff{q}}{m}{r}$.  By construction, $1 \in \lambda E$, so that we can
set the first column and the first row of $\mat{S}$ to the vectors $\trsp{[ 1
\;\; 0 \;\; \cdots \;\; 0] }$ and $ [1 \;\; 0 \;\; \cdots \;\; 0] $. We write
$\mat{\widehat{S}}$ for the remaining $(m-1) \times (r-1)$ block of $\mat{S}$.
One can also express the coordinates of the components of $\lambda \ev$ (with
respect to the basis $\{\lambda E_1,\lambda E_2, \hdots, \lambda E_r\}$) as a
matrix $\mat{C} \in \matRing{\ff{q}}{r}{n}$.  By construction, the $j$-th
column of $\mat{C}$ is the vector $\trsp{[1 \;\; 0 \;\; \cdots \;\; 0]}$.

\cref{lemma_1} estimates the probability to come across an index $j$ such that $e_j$ 
is non-zero. Once such an index is found, \cref{lemma_2} computes the probability 
that \cref{alg:decoding} succeeds in finding a non-singular block in $\mat{\widehat{S}}$.

\begin{lemma}\label{lemma_1}
  With the same notation and hypotheses as above, if an index $j$ is chosen
  uniformly at random in $\{1,\hdots,n\}$, then $e_j$ will be non-zero with
  probability $(1-q^{-r})/(1-q^{-n})$.
\end{lemma}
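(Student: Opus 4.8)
The plan is to make the sampling of $\ev$ explicit --- as in the set-up of \cref{proposition_SCSS}, $\ev$ is a uniformly random vector of $\Fqm^n$ whose support is $E$ --- and then to reduce the question to a statement about random subspaces. (For a genuinely fixed $\ev$ the claimed value is false in general, e.g.\ as soon as $\ev$ has a zero coordinate, so the relevant randomness, coming from the random decoding instance, is that $\ev$ is uniform among the vectors of $\Fqm^n$ with support $E$; the answer will not depend on the particular $E$, so we may keep $E$ fixed as above.) First I would fix an $\Fq$-basis $(E_1,\dots,E_r)$ of $E$ and write $\ev=(E_1\;\cdots\;E_r)\,\cm$ with $\cm\in\matRing{\Fq}{r}{n}$. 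Then the support of $\ev$ equals $E$ exactly when $\cm$ has full row rank $r$; a uniformly random such $\ev$ corresponds to $\cm$ drawn uniformly among the full-row-rank matrices of $\matRing{\Fq}{r}{n}$; and for each $j$ the coordinate $e_j$ vanishes if and only if the $j$-th column of $\cm$ is zero, i.e.\ the $j$-th canonical basis vector $\vec u_j$ of $\Fq^n$ lies in the right kernel $\ker\cm$.

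Next I would argue that $\ker\cm$ is a uniformly random $(n-r)$-dimensional $\Fq$-subspace of $\Fq^n$: for a fixed $(n-r)$-dimensional subspace $W$, the full-row-rank $\cm$ with $\ker\cm=W$ are exactly the compositions of the canonical projection $\Fq^n\twoheadrightarrow\Fq^n/W$ with an isomorphism $\Fq^n/W\xrightarrow{\ \sim\ }\Fq^r$, so there are $\lvert\mathrm{GL}_r(\Fq)\rvert$ of them, a count independent of $W$. Consequently $\Pr[e_j=0]$ is the probability that a fixed nonzero vector of $\Fq^n$ lies in a uniformly random $(n-r)$-dimensional subspace. For this I would use the standard identity $\Pr = (q^{\,n-r}-1)/(q^{\,n}-1)$, which one proves in one line by double-counting the pairs $(\vec v,W)$ with $\vec v\neq\zerov$, $\vec v\in W$ and $\dim W=n-r$ (counting first by $\vec v$, using transitivity of $\mathrm{GL}_n(\Fq)$ on nonzero vectors, then by $W$).

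Putting this together gives $\Pr[e_j=0]=(q^{\,n-r}-1)/(q^{\,n}-1)$ for every $j$, hence $\Pr[e_j\neq0]=1-(q^{\,n-r}-1)/(q^{\,n}-1)=(q^{\,n}-q^{\,n-r})/(q^{\,n}-1)=(1-q^{-r})/(1-q^{-n})$; since this value does not depend on $j$, drawing $j$ uniformly at random in $\{1,\dots,n\}$ yields the same number, which is the claim. The only genuinely non-trivial point --- hence the step I expect to be the main obstacle --- is justifying that $\ker\cm$ is uniform among $(n-r)$-dimensional subspaces (equivalently, that each such subspace is the kernel of equally many full-row-rank matrices), together with pinning down the probability space so that one proves the correct averaged statement rather than the false pointwise one. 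An alternative, matrix-free route would identify a spanning $n$-tuple of $E^n$ with a surjective linear map $\Fq^n\to E$ and compute directly $\Pr[e_1=0]=\prod_{i=0}^{r-1}\tfrac{q^{\,n-1}-q^i}{q^{\,n}-q^i}$, which telescopes to $(q^{\,n-r}-1)/(q^{\,n}-1)$ once one writes $\tfrac{q^{\,n-1}-q^i}{q^{\,n}-q^i}=\tfrac{q^{\,n-1-i}-1}{q^{\,n-i}-1}$; I would present whichever turns out shorter.
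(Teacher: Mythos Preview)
Your proposal is correct. In fact the ``alternative, matrix-free route'' you sketch at the very end --- computing $\Pr[e_j=0]=\prod_{i=0}^{r-1}\tfrac{q^{\,n-1}-q^i}{q^{\,n}-q^i}$ by counting full-row-rank matrices in $\matRing{\Fq}{r}{n}$ with a prescribed zero column and then observing the telescoping to $(q^{\,n-r}-1)/(q^{\,n}-1)$ --- is exactly the argument the paper gives.

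Your primary route, via the observation that $\ker\cm$ is uniform among $(n-r)$-dimensional subspaces of $\Fq^n$ (each fibre having size $\lvert\mathrm{GL}_r(\Fq)\rvert$) followed by the standard $(q^{\,n-r}-1)/(q^{\,n}-1)$ count, is a genuinely different and more conceptual derivation: it replaces an explicit matrix count by a symmetry argument. It buys a cleaner explanation of \emph{why} the answer depends only on $n$ and $r$, and it avoids writing out the product and checking the telescoping. The paper's direct count is shorter on the page and needs no auxiliary claim about the distribution of $\ker\cm$. Either works; your care in pinning down the probability space (uniform $\cm$ of full row rank, then averaging over $j$ is irrelevant since the per-$j$ probabilities coincide) is more explicit than the paper's presentation and is a welcome clarification.
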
 

\begin{proof} 
	A component $e_j$ of $\ev$ will be non-zero if and only if its corresponding column of coordinates in the matrix $\mat{C}$ is non-zero. 
	If the components of $\ev$ were chosen uniformly at random in the vector space $E$ of dimension $r$, the probability for a random component to be equal to zero would be exactly $q^{-r}$. This is not the case since there is a constraint on $\mat{C}$, more precisely it has to be of rank $r$. 
	
	Taking this into account, we can count the number of full rank matrices in $\matRing{\ff{q}}{r}{n}$ that have a zero column. The ratio between those matrices and all the full rank matrices in $\matRing{\ff{q}}{r}{n}$ is exactly the probability for a column chosen at random in $\mat{C}$ to be zero:
	\[  \prod_{i=0}^{r-1}\frac{q^{n-1}-q^i}{q^n-q^i}=\frac{q^{n-r}-1}{q^n-1}  .\]
	One concludes the proof by taking the complementary event.	\qed
\end{proof}

\begin{lemma}\label{lemma_2}
  Let $c \in \{1,\ldots,\lfloor \frac{m-1}{r-1}\rfloor\}$; with the same notation and hypotheses as above, if $E$ and $e$ are chosen uniformly at random, and 
  if the inverse of a non-zero coordinate of $e$, $\lambda$, is given, then
  at least one of the $c$ disjoint blocks $B_i$ in $\mat{\widehat{S}}$ is not singular with probability greater 
  than 
  \[  1-\frac{(1-p_{q,r-1,r-1})^c}{p_{q,r-1,m-1}}  \]
\end{lemma}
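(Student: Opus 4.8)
The plan is to reduce the claim to a counting statement about a uniformly random matrix over $\ff{q}$, the only delicate point being the distribution of the block $\mat{\widehat{S}}$. With the notation fixed above, let $H=\{x\in\ff{q}^{m}:x_{1}=0\}$ be the coordinate hyperplane for the basis $(1,\alpha,\dots,\alpha^{m-1})$. Since $1\notin H$ and $1\in\lambda E$ by construction, $(\lambda E)\cap H$ has dimension $r-1$; forgetting its (always zero) first coordinate identifies it with an $(r-1)$-dimensional subspace $W\subseteq\ff{q}^{m-1}$ (with coordinates indexed by $\{2,\dots,m\}$ to match $\mat{\widehat{S}}$), and $W$ is exactly the column span of $\mat{\widehat{S}}$. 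For a size-$(r-1)$ subset $T_{i}\subseteq\{2,\dots,m\}$, the block $B_{i}=\mat{S}_{T_{i},\{2,\dots,r\}}=\mat{\widehat{S}}_{T_{i},\any}$ is non-singular if and only if the coordinate projection $\ff{q}^{m-1}\to\ff{q}^{r-1}$ onto the coordinates indexed by $T_{i}$ restricts to an isomorphism on $W$; in particular, non-singularity of $B_{i}$ depends only on $W$ and not on the chosen basis of $\lambda E$.

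The next step, which I expect to be the main obstacle, is to determine the law of $W$. I claim $W$ is uniformly distributed over the $(r-1)$-dimensional $\ff{q}$-subspaces of $\ff{q}^{m-1}$. Because $1\notin H$, the map $V\mapsto V\cap H$ is a bijection from the $r$-dimensional subspaces of $\ff{q^{m}}$ containing $1$ onto the $(r-1)$-dimensional subspaces of $H$ (inverse $U\mapsto U+\langle 1\rangle$), so it suffices to prove that $\lambda E$ is uniform over the $r$-dimensional subspaces of $\ff{q^{m}}$ containing $1$. For this I would use that $(E,\ev)$ is uniform over the set of pairs with $\dim E=r$ and $\ev\in E^{n}$ generating $E$; by exchangeability of the coordinates of $\ev$ we may assume the distinguished index is $1$ and condition on $e_{1}\neq 0$. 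The rescaling map $(E,\ev)\mapsto(e_{1}^{-1}E,\,e_{1}^{-1}\ev)$ then has every fibre of the same size $q^{m}-1$ (a fibre is the orbit of a fixed image under scaling by $\ff{q^{m}}\setminus\{0\}$), so its image is uniform over its support; marginalising out the tuple, and noting that the number of $(n-1)$-tuples in a fixed $V\ni 1$ that complete $1$ to a generating family of $V$ depends only on $\dim V$, one gets that $\lambda E=e_{1}^{-1}E$ is uniform over the $r$-dimensional subspaces containing $1$, hence $W$ is uniform as claimed.

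Finally I would translate this into the stated bound. A uniformly random $(r-1)$-dimensional subspace of $\ff{q}^{m-1}$ is the column span of a uniformly random $\mat{M}\in\ff{q}^{(m-1)\times(r-1)}$ conditioned on the event $\mathcal{F}$ that $\mat{M}$ has full column rank, since each such subspace is the span of exactly $\#\mathrm{GL}_{r-1}(\ff{q})$ full-rank matrices. Let $\mathcal{E}_{i}$ be the event that $\mat{M}_{T_{i},\any}$ is invertible; by the first paragraph $\mathcal{E}_{i}$ depends only on the column span, so the probability we want equals $\Pr[\bigcup_{i=1}^{c}\mathcal{E}_{i}\mid\mathcal{F}]$, and
\[
  \Pr\left[\bigcup_{i=1}^{c}\mathcal{E}_{i}\mid\mathcal{F}\right]
  =1-\Pr\left[\bigcap_{i=1}^{c}\overline{\mathcal{E}_{i}}\mid\mathcal{F}\right]
  \ge 1-\frac{\Pr\left[\bigcap_{i=1}^{c}\overline{\mathcal{E}_{i}}\right]}{\Pr[\mathcal{F}]},
\]
using $\Pr[A\cap B]\le\Pr[A]$ with $A=\bigcap_{i}\overline{\mathcal{E}_{i}}$ and $B=\mathcal{F}$. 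Since $T_{1},\dots,T_{c}$ are pairwise disjoint---possible precisely because $c\le\lfloor\frac{m-1}{r-1}\rfloor$---the events $\mathcal{E}_{i}$ concern disjoint sets of rows of the uniform matrix $\mat{M}$ and are independent, so $\Pr[\bigcap_{i}\overline{\mathcal{E}_{i}}]=(1-p_{q,r-1,r-1})^{c}$ (the probability that a uniform $(r-1)\times(r-1)$ matrix fails to be invertible, raised to the power $c$), while $\Pr[\mathcal{F}]=p_{q,r-1,m-1}$; this gives the announced bound. The inequality is strict since the zero matrix lies in $\big(\bigcap_{i}\overline{\mathcal{E}_{i}}\big)\setminus\mathcal{F}$, whence $\Pr[\bigcap_{i}\overline{\mathcal{E}_{i}}\cap\mathcal{F}]<\Pr[\bigcap_{i}\overline{\mathcal{E}_{i}}]$.
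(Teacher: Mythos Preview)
Your proof is correct and follows the same strategy as the paper's: argue that (the column span of) $\mat{\widehat{S}}$ is uniformly distributed, identify this with a uniform full-rank matrix in $\matRing{\ff{q}}{(m-1)}{(r-1)}$, and bound the conditional probability that all $c$ disjoint $(r-1)\times(r-1)$ blocks are singular by the unconditional one divided by $p_{q,r-1,m-1}$. Your treatment is in fact more careful than the paper's on two points it glosses over---why $\lambda E$ is uniform among subspaces containing $1$ despite $\lambda$ depending on $\ev$ (the paper just invokes that scaling by a fixed nonzero element permutes subspaces), and why the particular basis of $\lambda E$ is irrelevant (your observation that non-singularity of $B_i$ depends only on the column span $W$).
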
 
\begin{proof} Since $\lambda$ is a fixed nonzero element in $\ff{q^m}$ and since $E$ is uniformly random, the vector space $\lambda E$ is also uniformly random. Therefore $\mat{\widehat{S}}$ is a matrix 
  chosen uniformly at random among all the full rank matrices in $\matRing{\ff{q}}{(m-1)}{(r-1)}$.
  The probability that all the $c$ blocks $B_i$ in $\mat{\widehat{S}}$ are singular is then bounded from above by 
  \begin{equation}
    \label{k_non_inv}
    \frac{\left(q^{(r-1)^2}-q^{(r-1)^2}p_{q,r-1,r-1}\right)^c \; q^{(r-1)(m-1-c(r-1))}}{q^{(m-1)(r-1)}p_{q,r-1,m-1}},
  \end{equation}
  which is the ratio between the number of matrices in $\matRing{\ff{q}}{(m-1)}{(r-1)}$ with $c$ singular disjoint blocks and 
  the total amount of full rank matrices in $\matRing{\ff{q}}{(m-1)}{(r-1)}$. It is an upper bound since the number of 
  matrices with $c$ singular blocks includes matrices that are not of full rank. 

  The reader can check that the term (\ref{k_non_inv}) is equal to 
  \[    \frac{\left(1-p_{q,r-1,r-1}\right)^c}{p_{q,r-1,m-1}}.  \]
  The probability that at least one of the $B_i$'s is non-singular is obtained using the complementary probability. \qed
\end{proof}
In \cref{alg:decoding}, the first requirement not to return fail is to 
find an index $j$ such that $e_j$ is non-zero; \cref{lemma_2} gives the probability of this event, that is to 
say $(1-q^{-r})/(1-q^{-n})$.
Once this index is found, the associated vector space $\lambda E$ is distributed 
uniformly among all the vector spaces of $\ff{q^m}$ of dimension $r$ since $E$ is chosen at random. 
Using \cref{lemma_1}, one has a lower bound on the probability that at least one of the 
$c$ block $B_i$'s is non singular. Thus 
the probability of \cref{proposition_SCSS} is 
\[   \frac{1-q^{-r}}{1-q^{-n}} \left(1-\frac{\left(1-p_{q,r-1,r-1}\right)^c}{p_{q,r-1,m-1}}\right). \] \qed

\end{document}